\newcommand{\rmnum}[1]{\romannumeral #1}
\newcommand{\Rmnum}[1]{\expandafter\@slowromancap\romannumeral #1@}
\theoremstyle{plain}
\newtheorem{theorem}{Theorem}
\newtheorem{proposition}{Proposition}
\newtheorem{lemma}{Lemma}
\theoremstyle{definition}
\newtheorem{definition}{Definition}
\newtheorem{example}{Example}
\theoremstyle{remark}
\newtheorem{remark}{Remark}
\newtheorem{case}{Case}
\begin{document}
%
\title{{\LARGE \bf
Distributed Supervisory Control of Discrete-Event Systems with Communication Delay}
}

\author{Renyuan Zhang, Kai Cai, Yongmei Gan, W.M. Wonham
\thanks{R. Zhang is with School of Automation, Northwestern Polytechnical University, China; K. Cai is with Urban Research Plaza, Osaka City University, Japan; Y. Gan is with School of Electrical Engineering, Xi¡¯an Jiaotong University, China; and W.M. Wonham is with the Systems Control Group, Department of Electrical and Computer Engineering, University of Toronto,
Canada. (Emails: ryzhang@nwpu.edu.cn;
kai.cai@info.eng.osaka-cu.ac.jp; ymgan@mail.xjtu.edu.cn;
wonham@control.utoronto.ca). } }

\maketitle

\thispagestyle{empty} \pagestyle{plain}

\begin{abstract}
This paper identifies a property of delay-robustness in distributed
supervisory control of discrete-event systems (DES) with
communication delays. In previous work a distributed supervisory
control problem has been investigated on the assumption that
inter-agent communications take place with negligible delay. From an
applications viewpoint it is desirable to relax this constraint and
identify communicating distributed controllers which are
delay-robust, namely logically equivalent to their delay-free
counterparts. For this we introduce inter-agent channels modeled as
2-state automata, compute the overall system behavior, and present
an effective computational test for delay-robustness.  From the test
it typically results that the given delay-free distributed control
is delay-robust with respect to certain communicated events, but not
for all, thus distinguishing events which are not delay-critical
from those that are.  The approach is illustrated by a workcell
model with three communicating agents.
\end{abstract}

\section{Introduction} \label{intro:1}


Distributed control is pervasive in engineering practice, either by
geographical necessity or to circumvent the complexity of
centralized (also called `monolithic') control. Existing work on
distributed supervisory control of discrete-event systems (DES) has
focused on synthesis of local controllers for individual agents
(plant components) such that the resulting controlled behavior is
identical with that achieved by global
supervision\cite{SuThistle:2006,MannaniGohari:2008, Darondeau:2005,
SeowPham:2009, CaiWonham:2010a,CaiWonham:2010b}. In these
contributions, it is assumed that agents make independent
observations and decisions, with instantaneous inter-agent
communication. While simplifying the design of distributed control,
this assumption may be unrealistic in practice, where controllers
are linked by a physical network subject to delays. Hence, to model
and appraise these delays is essential for the correct
implementation of control strategies.

The communication problem in distributed control of multi-agent DES
has been discussed by several researchers.  Kalyon et al.
\cite{Kalyon:2011} propose a framework for the control of
distributed systems modeled as communicating finite state machines
with reliable unbounded FIFO channels. They formulate a distributed
state avoidance control problem, and show that the existence of a
solution for the problem is undecidable.  Lin\cite{Lin:2014}
investigates supervisory control of networked discrete-event systems
which features communication delays and data losses in observation
and control. He assumes that the communication between a supervisor
and the plant is via a shared network and communication delays are
bounded. Darondeau and Ricker\cite{Darondeau:2012} propose to
synthesize distributed control starting from a monolithic supervisor
(in the DES sense) which can be represented as a distributed Petri
net; local nets are linked by message passing to effect token
transfer required by transitions joining places that have been
distributed to distinct locations. PN distributability is admitted
somewhat to constrain generality; but the exact relation of this
approach to our own remains open to future research.

Research on communication problems in decentralized/modular
supervisory control has also been reported in recent years. Taking
delays into consideration, Yeddes et al. \cite{YeddesAlla:1999}
propose a 3-state data transmission model, representing delays by
timed events with lower and finite upper time bounds; these events
are incorporated into the plant and specification automata, and the
time bounds further restricted by a supervisor synthesis procedure;
maximal permissiveness and nonblocking, however, are not guaranteed.
In \cite{BarrettLafortune:2000} Barrett and Lafortune propose an
information structure model for analysis and synthesis of
decentralized supervisory control, applicable in principle to the
case of communication delays, but they assume that such delays are
absent. For a limited class of specifications,
Tripakis\cite{Tripakis:2004} formulates certain problems in
decentralized control with bounded or unbounded communication delay,
modeling the system with communication by automata with state output
map. In this model the existence of controllers in case of unbounded
delay is undecidable.  In our paper, by contrast, we address this question: does a
given controller have the property of delay-robustness (as we define
it) or not?  This question is indeed decidable, and we provide an
effective test to answer it.
Schmidt et al.\cite{Schmidt:2007} consider a heterarchical
(hierarchical/decentralized) architecture requiring communication of
shared events among modules of the hierarchy.  A communication model
is developed in which delay may affect system operation unless
suitable transmission deadlines are met.  If so, correct operation
of the distributed supervisors is achieved if the network is
sufficiently fast.  In \cite{Schmidt:2008} correct heterarchical
operation is achieved subject to a condition of ``communication
consistency'', by which the occurrence of low-level events is
restricted by the feasibility of high-level events. Xu and Kumar
\cite{Xu:2008} consider monolithic supervisory control with bounded
communication delay $d$ (measured by event count) between plant and
controller; a condition is derived for equality of controlled
behaviors under delay $d$ or with zero delay respectively;
verification is exponential in $d$. Hiraishi\cite{Hiraishi:2009}
proposes an automaton formalism for communication  with delay in
decentralized control, and concludes semi-decidability of the
controller design problem in the case of $k$-bounded delay and in
case an observability condition holds for state-transition cycles.
Ricker and Caillaud\cite{Ricker:2011} consider decentralized control
(with a priori given individual observable event subsets) in the
case where co-observability fails and therefore inter-supervisor
communication is needed for correct global supervision. The issue is
when, what, and to whom a given local supervisor should communicate;
a solution is proposed to the protocol design problem. In our paper
this question does not arise because, with supervisor localization,
we already declare who communicates what to whom, and the problem is
then to analyze our existing ideal (instantaneous) communication
scheme to see if it is still correct in the presence of delay.

Thus we consider distributed control with separately modeled
communication channels having unknown unbounded delay, imposed on an
existing distributed architecture known to be optimal and
nonblocking for zero delay. In this paper and its conference
precursor \cite{ZhaCaiWon:DR_conf12}, we start from the DES
distributed control scheme called `supervisor localization' reported
in \cite{CaiWonham:2010a,CaiWonham:2010b}, which describes a
systematic top-down approach to design distributed controllers which
collectively achieve global optimal and nonblocking supervision.
Briefly, we first synthesize a monolithic supervisor, or
alternatively a set of decentralized supervisors, assuming zero
delay; then we apply supervisor localization to decompose each
synthesized supervisor into local controllers for individual plant
components, in this process determining the set of events that need
to be communicated. Next, and central to the present paper, we
propose a channel model for event communication, and design a test
to verify for which events the system is delay-robust (as we define
it below).

The initial control problem is the standard `Ramadge-Wonham' (RW)
problem \cite{RamadgeWonham:87,WonhamRamadge:87,Wonham:2011a}. Here
the plant (DES to be controlled) is modeled as the synchronous
product of several DES agents (plant components), say ${\bf
AGENT}_1$, ${\bf AGENT}_2$, ..., that are independent, in the sense
that their alphabets $\Sigma_1$, $\Sigma_2$, ..., are pairwise
disjoint. In a logical sense these agents are linked by
specifications ${\bf SPEC}_1$, ${\bf SPEC}_2$, ..., each of which
(typically) restricts the behavior of an appropriate subset of the
${\bf AGENT}_i$ and is therefore modeled over the union of the
corresponding subfamily of the $\Sigma_i$. For each ${\bf SPEC}_j$,
a `decentralized' supervisory controller ${\bf SUP}_j$ is computed
in the same way as for a `monolithic' supervisor
\cite{RamadgeWonham:87}; it guarantees optimal (i.e. maximally
permissive) and nonblocking behavior of the relevant subfamily (the
`control scope' of ${\bf SPEC}_j$) of the ${\bf AGENT}_i$.   In
general it will turn out that the synchronous product of all the
${\bf SUP}_j$ is blocking (e.g. may cause deadlock in the overall
controlled behavior); in that case one or more additional
`coordinators' must be adjoined to suitably restrict the
decentralized controlled behavior (see \cite{CaiWonham:2010b} for an
example). Techniques for coordinator design are available in the
literature (e.g.
\cite{FengCai:2009,WongWonham:1998,HillTilbury:2006,SuSchuppen:2010})
and in this paper we take them for granted. On achieving
satisfactory decentralized control we finally `localize' each
decentralized supervisor, including the coordinator(s), if any, to
the agents that fall within its control scope; the algorithm that
achieves this is detailed in \cite{CaiWonham:2010a}, and we shall
refer to it as {\it Localize}. The result of {\it Localize} is that
each ${\bf AGENT}_i$ is equipped with local controllers, one for
each of the ${\bf SPEC}_j$ whose scope it falls within; in that
sense ${\bf AGENT}_i$ is now `intelligent' and semi-autonomous, with
controlled behavior ${\bf SUPLOC}_i$, say, while the synchronous
product behavior of all the ${\bf SUPLOC}_i$ is provably that of the
monolithic supervisor for the RW problem we began with. Autonomy of
the ${\bf SUPLOC}_i$ is qualified, in that normally the transition
structure of each ${\bf SUPLOC}_i$ will include events from various
other ${\bf AGENT}_k$ with $k \ne i$. The implementation of our
distributed control therefore requires instantaneous communication
by ${\bf AGENT}_k$ of `communication' events (when they occur, in
its private alphabet $\Sigma_k$) to ${\bf SUPLOC}_i$ so the latter
can properly update its state. Think of a group of motorists
maneuvering through a congested intersection without benefit of
external traffic control, each instead depending solely on signals
from (mostly) neighboring vehicles and on commonly accepted
protocols.  In our DES model each ${\bf SUPLOC}_i$ can disable only
its private controllable events, in $\Sigma_i$, but the logic of
disablement may well depend on observation of critical events from
certain other ${\bf AGENT}_k$ , as remarked above. It is clear that
if these communications are subject to indefinite time delay, then
control may become disrupted and the collective behavior logically
unacceptable. Our first aim is to devise a test to distinguish the
latter case from the `benign' situation where delay is tolerable, in
the sense that `logical' behavior is unaffected, even though in some
practical sense behavior might be degraded, for instance severely
slowed down\footnote{Similar issues are addressed in the literature
on `delay-insensitive' asynchronous networks; for the definition see
\cite{Udding:86} and for a useful summary \cite{Zhang:1997}.}.
This investigation would provide practitioners with
useful information to implement distributed supervisors by
communication channels: `fast' channels must be assigned for
communication of `delay-critical' events, while `slow'
channels suffice for `delay-robust' events.

In Sect.~\ref{sec:3}, we introduce the model of our
communication channel. As will be seen, there is an implicit constraint
that a channeled event (i.e. a communication event transmitted by a
channel with indefinite delay) can occur and be transmitted only
when its channel is available. This is similar to the mechanism of
``synchronous elastic circuits" or ``latency insensitive systems"
(e.g. \cite{KishCort:2008}); see Remark~\ref{rem:elastic}  below for
details. As a consequence, an uncontrollable
channeled event may or may not be blocked by its channel, the former
case being undesirable. Our second aim is to distinguish these two
cases; when an uncontrollable event is indeed blocked, we discuss
how long it can be delayed.

We proceed to a formal review of distributed control by supervisor
localization on the assumption of instantaneous inter-agent
communication. Then we introduce inter-agent communication with
delay, modeled by a separate logical channel for each delayed
communication event (i.e. channeled event). As our main result, both
a definition and a computational test are provided for
`delay-robustness' of the channeled distributed system with respect
to an arbitrary subset of communication events. In addition,  we
employ the standard algorithm for checking controllability to
identify whether or not an uncontrollable channeled event is blocked
by its channel. These issues are illustrated by a workcell model
with three communicating agents. Finally we present conclusions and
suggestions for future work.

\section{Preliminaries} \label{sec:2}

\subsection{Notation} \label{sec:2.1}


Following \cite{Wonham:2011a} we recall various standard concepts
and notation. Consider a system $\bf G$ of $n$ component DES ${\bf
G}_i = ({Q_i},{\Sigma_i},{\eta _i},{q_{i0}},{Q_{im}})$, $i \in N :=
\{1,2,...,n\}$, where ${Q_i}$ is the (finite) state set, ${\Sigma
_i}$ is the (finite) set of event labels, ${\eta _i}:{Q_i}\times
{\Sigma _i} \to {Q_i}$ is the transition (partial) function,
${q_{i0}}$ is the initial state, and $Q_{im} \subseteq Q$ is the set
of marker states. Each event set $\Sigma_i$ is partitioned as the
disjoint union ${\Sigma _i} = {\Sigma _{ic}} \cup {\Sigma _{iu}}$
where $\Sigma_{ic}$ (resp. $\Sigma_{iu}$) is the subset of
controllable (resp. uncontrollable) events for ${\bf G}_i$; the full
event set for $\bf G$ is the union $\Sigma  =  \cup \{{\Sigma _i}|i
\in N\}$.

Let $\Sigma _i^*$ denote the set of all finite strings of elements
in ${\Sigma _i}$, including the empty string $\epsilon $, and as
usual extend the transition function ${\eta _i}$  to ${Q_i} \times
\Sigma _i^*$, by defining ${\eta_i}({q_i},\epsilon ) = {q_i}$ ,
${\eta _i}({q_i},s{\sigma}) = {\eta _i}({\eta_i}({q_i},s),{\sigma})$
for all $q_i \in Q_i$, $s \in \Sigma_i^*$ and ${\sigma} \in {\Sigma
_i}$. We write $\eta_i(q_{i0},s)!$ to mean that $\eta_i(q_{i0},s)$
is defined. The {\it prefix closure} of a language $L$ over ${\Sigma
^*}$ is defined as $\overline L = \{ s \in {\Sigma ^*}|su \in
L{\rm{\ for\ some\ }}u \in {\Sigma^*}\}$. The {\it closed behavior}
and {\it marked behavior} of ${\bf G}_i$ are defined respectively by
$L({\bf G}_i) = \{ s \in \Sigma _i^*|{\eta_i}({q_{i0}},s{\rm{)! }}\}
$ and ${L_m}({\bf G}_i) = \{s \in L({\bf G}_i)|{\eta _i}({q_{i0}},s)
\in {Q_{im}}\}$.

As in \cite{CaiWonham:2010a,CaiWonham:2010b} we assume that the
${\bf G}_i$ are {\it a priori} independent, in the sense that their
alphabets $\Sigma_i$ are pairwise disjoint. The system $\bf G$
representing their combined behavior is defined to be their
synchronous product ${\bf G} = (Q,\Sigma ,\eta ,{q_0},{Q_m}) =
Sync({\bf G}_1,...,{\bf G}_n)$\footnote{We may safely assume that
the implementation $Sync$ of synchronous product is always
associative and commutative; for more on this technicality see
\cite{Wonham:2011a}, Sect. 3.3.}. The closed behavior and marked
behavior of $\bf G$ are $L({\bf G}) = ||\{ L({\bf G}_i)|i \in N\} $
and ${L_m}({\bf G}) = ||\{ {L_m}({\bf G}_i)|i \in N\}$ where $||$
denotes synchronous product of languages. Assume each ${\bf G}_i$ is
trim (i.e. reachable and coreachable); then by independence, $\bf G$
is trim, i.e., $\overline {{L_m}({\bf G})} = L({\bf G})$.

Let ${\Sigma_o} \subseteq \Sigma$ be a subset of events thought of
as `observable'. We refer the reader to \cite{Wonham:2011a} for the
formal definition of natural projection $P:{\Sigma ^*} \to
\Sigma_o^*$, DES isomorphism, ${\bf G}$-controllability, and the supremal quasi-congruence relation.
Simply stated, natural projection $P$ on a string $s\in \Sigma^*$
erases all the occurrences of $\sigma \in \Sigma$ in $s$ such that
$\sigma \notin \Sigma_o$, namely $P \sigma = \epsilon$ (the empty
string); $P$ is implemented as $Project({\bf
G},Null[\Sigma-\Sigma_o])$, which returns a (state-minimal) DES $\bf
PG$ over $\Sigma_o$ such that $L_m({\bf PG}) = PL_m({\bf G})$ and
$L({\bf PG}) = PL({\bf G})$. Two DES are isomorphic if they are
identical up to relabeling of states; ${\bf G}$-controllability is
the property required for a sublanguage of ${L_m}({\bf G})$ to be
synthesizable by a supervisory controller; while projection modulo supremal quasi-congruence produces a
(possibly nondeterministic) abstraction (reduced version) of a DES
$\bf G$, denoted $Supqc({\bf G}, Null[\Sigma-\Sigma_o])$, which
preserves observable transitions and the `observer'
property\cite{WongWonham:2004,FengWonham:2010}. As detailed in
\cite{Wonham:2011a} these operations are available in a software
implementation \cite{Wonham:2011b} and will be referred to here as
needed.

\subsection{Distributed Control without Communication Delay} \label{sec:2.2}


Next we summarize the distributed control theory (assuming zero
communication delay) reported in
\cite{CaiWonham:2010a,CaiWonham:2010b}. First suppose $\bf G$ is to
be controlled to satisfy a specification language $L_m({\bf SPEC})
\subseteq {\Sigma ^*}$ represented by a DES $\bf SPEC$. Denote by $K
\subseteq {\Sigma ^*}$ the supremal controllable sublanguage of
${L_m}({\bf G}) \cap L_m({\bf SPEC})$(for details see
\cite{Wonham:2011a}). Assume $K$ is represented by the DES $\bf
SUP$, i.e. $\bf SUP$ has closed and marked behavior
\begin{equation} \label{eq:SUP}
L({\bf SUP}) = \overline {K}, \ \ \ {L_m}({\bf SUP}) = K.
\end{equation}

Since ${\bf G} = Sync({\bf G}_1, ..., {\bf G}_n)$ is the synchronous
product of independent components we seek to implement $\bf SUP$ in
distributed fashion by `localizing' $\bf SUP$ to each ${\bf G}_i$ as
proposed in \cite{CaiWonham:2010a,CaiWonham:2010b}. For this we
bring in a family of local controllers ${\bf LOC} = \{ {\bf LOC}_i|i
\in N\}$, one for each ${\bf G}_i$, and define $L({\bf LOC}) =  \|
\{ L({\bf LOC}_i)|i \in N\} $ and ${L_m}({\bf LOC}) =  \| \{
{L_m}({\bf LOC}_i)|i \in N\}$. It is shown in
\cite{CaiWonham:2010a,CaiWonham:2010b} that
\begin{subequations} \label{e1}
    \begin{align}
        L({\bf G}) \cap L({\bf LOC}) &= L({\bf SUP})\\
        {L_m}({\bf G}) \cap {L_m}({\bf LOC}) &= {L_m}({\bf SUP})
    \end{align}
\end{subequations}
Here, the supervisory action of {\bf SUP} is fully distributed among
the set of local controllers, each acting independently and
asynchronously, except for being synchronized through
`communication' events. Generally, each local controller has a much
smaller state set than $\bf SUP$ and a smaller event subset of
$\Sigma$, containing just the events of its corresponding plant
component, together with those communication events from other
components that are essential to make correct control decisions.  We
remark that if the system and its supervisor are large scale, we
first synthesize a set of decentralized supervisors to achieve
global optimality and nonblocking, and then apply supervisor
localization to decompose each decentralized supervisor in the set
(as in \cite{CaiWonham:2010b}).

\section{Distributed Control with Communication Delay} \label{sec:3}


Cai and Wonham \cite{CaiWonham:2010a} discuss a boundary case of
optimal distributed control that is {\it fully-localizable} where
inter-agent communication is not needed, namely the alphabet of each
local controller ${\bf LOC}_i$ is simply $\Sigma_i$, so that ${\bf
LOC}_i$ observes only events in its own agent ${\bf G}_i$. In this
case, no issue of delay will arise. The more general and usual case
is that inter-agent communication is imperative.

For simplicity assume temporarily that the system $\bf G$ consists
of two components ${\bf G}_1$ and ${\bf G}_2$, and let the
monolithic supervisor $\bf SUP$ (in (\ref{eq:SUP})) be given. By
localization we compute local controllers ${\bf LOC}_1$ with event
set $\Sigma_{{\bf LOC}_1}$ and ${\bf LOC}_2$ with event set
$\Sigma_{{\bf LOC}_2}$; then the local controlled behaviors are
represented by
\begin{align}
{\bf SUP}_1 &= Sync({\bf G}_1, {\bf LOC}_1) \label{e01} \\
{\bf SUP}_2 &= Sync({\bf G}_2, {\bf LOC}_2). \label{e02}
\end{align}
Let ${\bf LOCSUP} = Sync({\bf SUP}_1, {\bf SUP}_2)$. By the localization theory of \cite{CaiWonham:2010a,CaiWonham:2010b}
we know that $L({\bf LOCSUP}) = L({\bf SUP})$ and $L_m({\bf LOCSUP}) = L_m({\bf SUP})$,
namely, the synchronized behavior of ${\bf SUP}_1$ and ${\bf SUP}_2$ agrees with that of the
monolithic control $\bf SUP$ (in (\ref{eq:SUP})).

In the general localization theory (instantaneous) inter-agent
communication is both possible and necessary, so the alphabet
$\Sigma_{{\bf LOC}_1}$ of ${\bf LOC}_1$ (resp. $\Sigma_{{\bf LOC}_2}$ of
${\bf LOC}_2$) will include elements ({\it communication events})
from $\Sigma_2$ (resp. $\Sigma_1$) as well as events from its
`private' alphabet $\Sigma_1$ (resp. $\Sigma_2$). Let $\Sigma_{com,
1}$ (resp. $\Sigma_{com, 2}$) represent the set of communication
events from $\Sigma_2$ (resp. $\Sigma_1$), i.e $\Sigma_{com, 1} =
\Sigma_{{\bf LOC}_1} - \Sigma_1$ (resp. $\Sigma_{com, 2} = \Sigma_{{\bf
LOC}_2} - \Sigma_2$); then the set of communication events in $\bf
LOCSUP$ (i.e. $\bf SUP$) is
\begin{equation} \label{e03}
\Sigma_{com} = \Sigma_{com, 1} \cup \Sigma_{com, 2}.
\end{equation}
By (\ref{e01}) and (\ref{e02}), the alphabet $\Sigma_{{\bf SUP}_1}$ of ${\bf SUP}_1$ is
\begin{align}
\Sigma_{{\bf SUP}_1} &= \Sigma_1 \cup \Sigma_{com, 1}, \label{e04}
\end{align}
and the alphabet $\Sigma_{{\bf SUP}_2}$ of ${\bf SUP}_2$ is
\begin{align}
\Sigma_{{\bf SUP}_2} &= \Sigma_2 \cup \Sigma_{com, 2}. \label{e05}
\end{align}
We say that a communication event in $\Sigma_{com, 1}$ is {\it imported}
from ${\bf G}_2$ by ${\bf LOC}_1$ (resp. $\Sigma_{com, 2}$, ${\bf G}_1$
and ${\bf LOC}_2$).

\begin{remark} \label{rem0}
For every state $x$ of each controller ${\bf LOC}_i$ ($i \in N$), and
each communication event $\sigma$ in ${\bf LOC}_i$ but imported from
some other component ${\bf G}_j$ ($j \neq i$), if $\sigma$ is not
defined at $x$, we add a $\sigma$-selfloop, i.e. transition
$(x,\sigma,x)$ to ${\bf LOC}_i$. Now, $\sigma$ is defined at every
state of ${\bf LOC}_i$. With this modification, the new local
controllers ${\bf LOC}_i$ are also control equivalent to $\bf SUP$
(because ${\bf LOC}_i$ does not disable events $\sigma$ from other
components ${\bf G}_j$ and $\sigma$ will be disabled by ${\bf LOC}_j$ if
and only if it is disabled by $\bf SUP$) and the definition of
$\sigma$ at every state of ${\bf LOC}_i$ is consistent with the
assumption that ${\bf LOC}_i$ may receive $\sigma$ after indefinite
communication delay.
\end{remark}

Next we model the way selected communication events are imported
with indefinite time delay and call such events {\it channeled
events}. Let $\Sigma_{ch}$ represent the set of channeled events;
then $\Sigma_{ch} \subseteq \Sigma_{com}$ ($\Sigma_{com}$ is defined
in (\ref{e03})). For example assume that communication event $r$ in
$\Sigma_2$ is transmitted to ${\bf LOC}_1$ from ${\bf G}_2$ via a
channel modeled as the (2-state) DES ${\bf CH}(2,r,1)$ in
Fig.~\ref{fig1}\footnote{Communications among local supervisors can
be modeled in different ways, e.g.
\cite{BarrettLafortune:2000,Tripakis:2004,ParkCho:2007}. In our
model channel capacity (for each separate channeled event) is
exactly 1 (event), imposing the constraint that a given labeled
event cannot be retransmitted unless its previous instance has been
received and acknowledged by the intended recipient (see footnote
\ref{fnote4}); this constraint may not be appropriate in all
applications. We adopt this model because its structure is
reasonable, simple, and renders the distributed control problem
(with unbounded communication delay) tractable.};
then $r$ is a channeled event. In the transition structure of ${\bf
LOC}_1$, hence also of ${\bf SUP}_1$, we replace every instance of
event $r$ with a new event $r'$, the `output' of ${\bf CH}(2,r,1)$
corresponding to input $r$ (we call $r'$ the {\it signal event} of $r$);
call these modified models ${\bf LOC}_1'$, ${\bf SUP}_1'$.
Thus if and when $r$ happens to occur (in ${\bf G}_2$) ${\bf
CH}(2,r,1)$ is driven by synchronization from its initial state 0
into state 1; on the eventual (and spontaneous) execution of event
$r'$ in ${\bf SUP}_1'$, which resets ${\bf CH}(2,r,1)$ to state 0,
the execution of $r'$ will be forced by synchronization in ${\bf
LOC}_1'$. In the standard untimed model of DES
employed here, the `time delay' between an occurrence of $r$ and
$r'$ is unspecified and can be considered unbounded; indeed,
nothing in our model so far implies that $r'$ will cause an actual state
change (as opposed to selfloop) because, subsequent to the occurrence of $r$ in ${\bf G}_2$, ${\bf
SUP}_1'$ might conceivably move to states (by events other than
$r'$) where $r'$ is a selfloop and its occurrence
will not cause a state change in ${\bf SUP}_1'$. As a convention, the
control status of $r'$ (controllable or uncontrollable) is taken to be that
of $r$. Suppose in particular that $r$ in $\Sigma_2$ is
controllable. Since ${\bf LOC}_1$ has `control authority' only over
controllable events in its private alphabet $\Sigma_1$, ${\bf
LOC}_1'$ never attempts to disable $r'$ directly; $r'$ can only be
disabled implicitly by the `upstream' disablement by ${\bf LOC}_2$
of $r$.
\begin{figure}[!t]
\centering
  \includegraphics[scale=0.6]{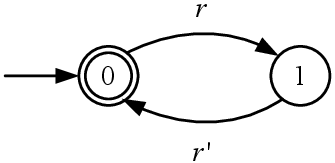}\\
  \caption{Communication channel {${\bf CH}(2,r,1)$}, from agent ${\bf G}_2$ to local controller ${\bf
  LOC}_1$ with channeled event $r$
(in the transition diagram of a DES, the circle with $\rightarrow$
represents the initial state and a double circle represents a marker
state).  One may think of the delay of $r'$ as being the \emph{sum}
of the delay of (forward) event transmission plus the delay of
(backward) acknowledgement, i.e. two delays lumped into one.
Note that when event $r$ is communicated to multiple
local controllers, we employ separate channels with distinct signal
events, as illustrated in Fig.~\ref{fig:multichn} below.}
  \label{fig1}
\end{figure}

In general ${\bf LOC}_1'$ `knows' that $r$ has occurred in ${\bf
G}_2$ only when it executes $r'$; meanwhile, other events may have
occurred in ${\bf G}_2$. The only constraint placed on events in
${\bf G}_2$ is that $r$ cannot occur again until $r'$ has finally
reset ${\bf CH}(2,r,1)$ and the communication cycle is ready to
repeat. In other words, event $r$ will be delayed in re-occurring
until the channel used to transmit event $r$ again becomes available.
If event $r$ is controllable, it can be disabled or delayed by the
local controller ${\bf LOC}_2$;\footnote{\label{fnote4}Our
model implicitly assumes that the sender (i.e. ${\bf LOC}_2$) may observe which of the
two states ${\bf CH}(2,r,1)$ is at. If ${\bf CH}(2,r,1)$ is at state 1
(the channel is not available), ${\bf LOC}_2$ disables $r$; otherwise
$r$ is enabled. In a more fine-grained model we may set $r' = r_{21}'r_{12}'$
where $r_{21}'$ signals to ${\bf LOC_1'}$ the occurrence of $r$ in ${\bf G_2}$,
while $r_{12}'$ represents an acknowledgement to ${\bf LOC_2}$ that $r_{21}'$
has occurred in ${\bf SUP_1'}$. We prove in Appendix~\ref{app0} that these two channel models
are equivalent as far as the unbounded delay-robust property is concerned.}
 but if event $r$ is uncontrollable, the constraint
placed on ${\bf G}_2$ will require that $r'$ should reset ${\bf CH}(2,r,1)$
before $r$ is enabled to occur again, possibly in violation of the
intended meaning of `uncontrollable'. This issue will be discussed in
Sect.~\ref{sec:3.3}. The channel ${\bf CH}(2,r,1)$ is not considered a
control device, but rather an intrinsic component of the physical
system being modeled; it will be `hard-wired' into the model by
synchronous product with ${\bf G}_1$ and ${\bf G}_2$.

\begin{remark} \label{rem:elastic}
We note that our model of communication channel (Fig.~\ref{fig1})
is similar to the mechanism of ``synchronous elastic circuits" or
``latency insensitive systems" (e.g. \cite{KishCort:2008}). A synchronous
elastic circuit is one whose behavior does not change despite
latencies (i.e. delays) of communication channels. One method to
build synchronous elastic circuits is ``synchronous elastic flow"
\cite{KishCort:2008}, where the idea of ``back pressure" is used in
a similar way to the ``signal events" we use in our model of communication
delay.
\end{remark}

Continuing with this special case we consider the joint behavior of ${\bf G}_1$, ${\bf G}_2$ and ${\bf CH}(2,r,1)$ under control of ${\bf LOC}_1'$ and ${\bf LOC}_2$, namely
\begin{align}
{\bf SUP}':&=Sync({\bf G}_1,{\bf LOC}_1', {\bf CH}(2,r,1), {\bf G}_2, {\bf LOC}_2) \notag\\
&=Sync({\bf SUP}_1', {\bf CH}(2,r,1), {\bf SUP}_2)\label{e1a}
\end{align}
defined over the alphabet ${\Sigma_1} \cup \{ r'\}  \cup {\Sigma_2}$. We refer to ${\bf SUP}'$ as the {\it channeled behavior} of $\bf SUP$ (in (\ref{eq:SUP})) with $r$ being the channeled event (i.e. $\Sigma_{ch} = \{r\}$).

\subsection{Delay-robustness and Delay-criticality} \label{sec:3.1}


In this subsection we formalize the definition and present an
effective computational test for delay-robustness.

Of principal interest is whether or not the communication delay
between successive occurrences of $r$ and $r'$ is tolerable in the
intuitive sense indicated above.

Let $\Sigma_{sig}$ be the set of new events introduced by the
communication channels, in which each element is the signal event of
an event in $\Sigma_{ch}$, i.e.
\begin{align} \label{e1b}
\Sigma_{sig} = \{\sigma'| \sigma \in \Sigma_{ch}, \sigma' ~\text{is the signal event of }~\sigma\}.
\end{align}
In ${\bf SUP}'$ (in (\ref{e1a})), $\Sigma_{ch} = \{r\}$ and $\Sigma_{sig} = \{r'\}$. Then the event set of ${\bf SUP}'$ will be $\Sigma' = \Sigma \cup \Sigma_{sig} = \Sigma \cup \{r'\}$. Let $P:\Sigma'^* \rightarrow \Sigma^*$ be the natural projection of $\Sigma'^*$ onto $\Sigma^*$\cite{Wonham:2011a}, i.e. $P$ maps $r'$ to $\epsilon$ (empty string).

To define whether or not ${\bf SUP}'$ with alphabet $\Sigma'$ has the same behavior as $\bf SUP$, when viewed through $P$, we require that

1. anything $\bf SUP$ can do is the $P$-projection of something ${\bf SUP}'$ can do (${\bf SUP}'$ is `complete'); and

2. no $P$-projection of anything ${\bf SUP}'$ can do is disallowed by $\bf SUP$ (${\bf SUP}'$ is `correct').

For completeness we need at least the inclusions
\begin{align}
PL({\bf SUP}') &\supseteq L({\bf SUP}) \label{e2a}\\
PL_m({\bf SUP}') &\supseteq L_m({\bf SUP}) \label{e2b}
\end{align}

In addition, however, we need the following {\it observer property} of $P$ with respect to ${\bf SUP}'$ and $\bf SUP$. Suppose ${\bf SUP}'$ executes string $s\in L({\bf SUP}')$, which will be viewed as $Ps \in L({\bf SUP})$. As $\bf SUP$ is nonblocking, there exists $w \in \Sigma^*$ such that $(Ps)w \in L_m({\bf SUP})$. For any such $w$ `chosen' by $\bf SUP$, completeness should require the ability of ${\bf SUP}'$ to provide a string $v\in \Sigma'^*$ with the property $Pv = w$ and $sv \in L_m({\bf SUP}')$. Succinctly (cf. \cite{Wonham:2011a,FengWonham:2010})
\begin{align}\label{e2c}
(\forall s\in \Sigma'^*)(\forall w \in \Sigma^*)~&s \in L({\bf SUP}')~ \& ~(Ps)w \in L_m({\bf SUP})\notag\\
\Rightarrow &(\exists v \in \Sigma'^*)~Pv = w ~\&~ sv \in L_m({\bf SUP}').
\end{align}

\begin{remark}\label{rem1}
In (\cite{Wonham:2011a}, Chapt. 6), $P$ is defined to be an $L_m({\bf SUP}')${\it -observer} if
\begin{align*}
(\forall s\in \Sigma'^*)(\forall w \in \Sigma^*)~&s \in L({\bf SUP}') ~\&~ (Ps)w \in PL_m({\bf SUP}')\notag\\
\Rightarrow &(\exists v \in \Sigma'^*)~Pv = w ~\&~ sv \in L_m({\bf SUP}').
\end{align*}
It is clear that when $PL_m({\bf SUP}') = L_m({\bf SUP})$, the observer property of $P$ with respect to ${\bf SUP}'$ and $\bf SUP$ is identical with the $L_m({\bf SUP}')$-observer property of $P$.
\end{remark}

Briefly, we define ${\bf SUP'}$ to be {\it complete} relative to $\bf SUP$ if (\ref{e2a}), (\ref{e2b}) and (\ref{e2c}) hold.

Dually, but more simply, we say that ${\bf SUP}'$ is  {\it correct} relative to $\bf SUP$ if
\begin{align}
PL({\bf SUP}') &\subseteq L({\bf SUP}) \label{e3a}\\
PL_m({\bf SUP}') &\subseteq L_m({\bf SUP}) \label{e3b}
\end{align}

To summarize, we make the following definition.

\begin{definition} \label{def1}
For given ${\bf SUP}'$ in (\ref{e1a}) and $\Sigma_{ch} = \{r\}$, $\bf
SUP$ (in (\ref{eq:SUP})) is {\it delay-robust} relative to
$\Sigma_{ch}$ provided ${\bf SUP}'$ is complete and correct relative
to $\bf SUP$, namely, conditions (\ref{e2a})-(\ref{e3b}) hold, or
explicitly
\begin{align}
&PL({\bf SUP}') = L({\bf SUP}) \label{e4a}\\
&PL_m({\bf SUP}') = L_m({\bf SUP}) \label{e4b}\\
&P~\text{has the observer property (\ref{e2c}) with respect to}~ {\bf SUP}'~\text{and}~{\bf SUP}. \tag{\ref{e2c}bis}
\end{align}
\end{definition}

We stress that in Definition~\ref{def1} (and its generalizations
later) the natural projection $P$ is fixed by the choice of
channeled events and structure of the communication model.  If the
definition happens to fail (for instance if the observer property
fails), the only cure in the present framework is to alter the set
of channeled events, in the worst case reducing it to the empty set,
that is, declaring that all communication events must be transmitted
without delay.

The following example shows why the observer property is really
needed; for if (\ref{e4a}) and (\ref{e4b}) hold, but (\ref{e2c})
fails, ${\bf SUP}'$ may have behavior which is distinguishable from
that of $\bf SUP$.

\begin{example} \label{exmp1}
Let ${\bf SUP}_1$ and ${\bf SUP}_2$ be the generators shown in
Fig.~\ref{fig10}; assume event 20 in ${\bf SUP}_2$ is exported to
${\bf SUP}_1$, i.e., $r = 20$ and $r' = 120$; ${\bf SUP}_1'$ is
obtained by replacing $20$ in ${\bf SUP}_1$ by $120$, and ${\bf SUP}'$
is obtained by (\ref{e1a}). By inspection of Fig.~\ref{fig11},
(\ref{e4a}) and (\ref{e4b}) are verified to hold. However, we can
see that (\ref{e2c}) fails. Let $s = 20.10.120.12 \in L({\bf SUP}')$;
then $Ps = 20.10.12$. Now $(Ps).11 = 20.10.12.11 \in L_m({\bf
SUP})$; but there does not exist a string $v$ such that $Pv = 11$
and $sv \in L_m({\bf SUP}')$. Thus, $\bf SUP$ can execute 11 after
$Ps$, but ${\bf SUP}'$ can only execute $\epsilon$ after $s$. This
means that ${\bf SUP}'$ has behavior distinguishable from that of $\bf
SUP$.
\end{example}
\begin{figure}[!t]
\centering
    \includegraphics[scale=0.5]{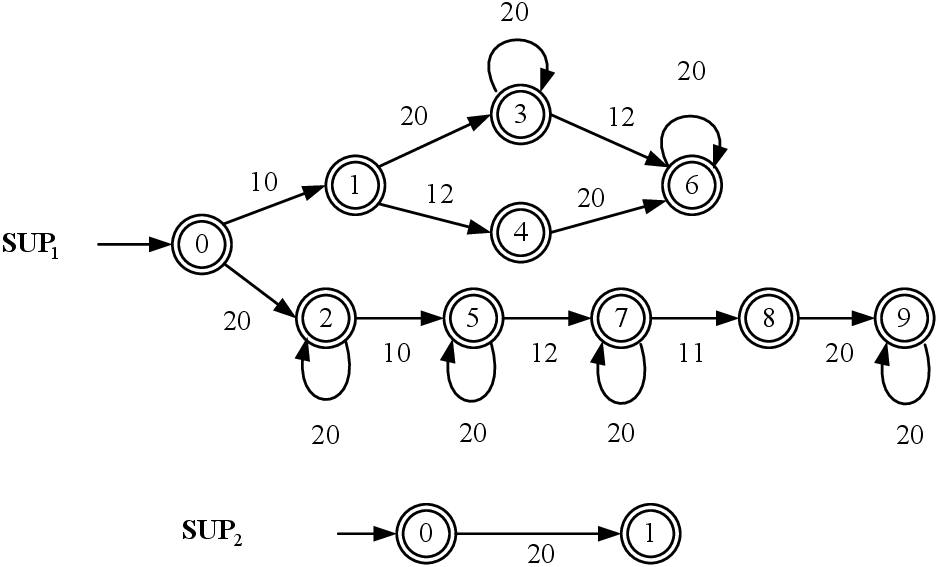}\\
  \caption{Example~\ref{exmp1}: ${{\bf SUP}_1}$ and ${{\bf SUP}_2}$}
  \label{fig10}
\end{figure}

\begin{figure}[!t]
\centering
    \includegraphics[scale=0.5]{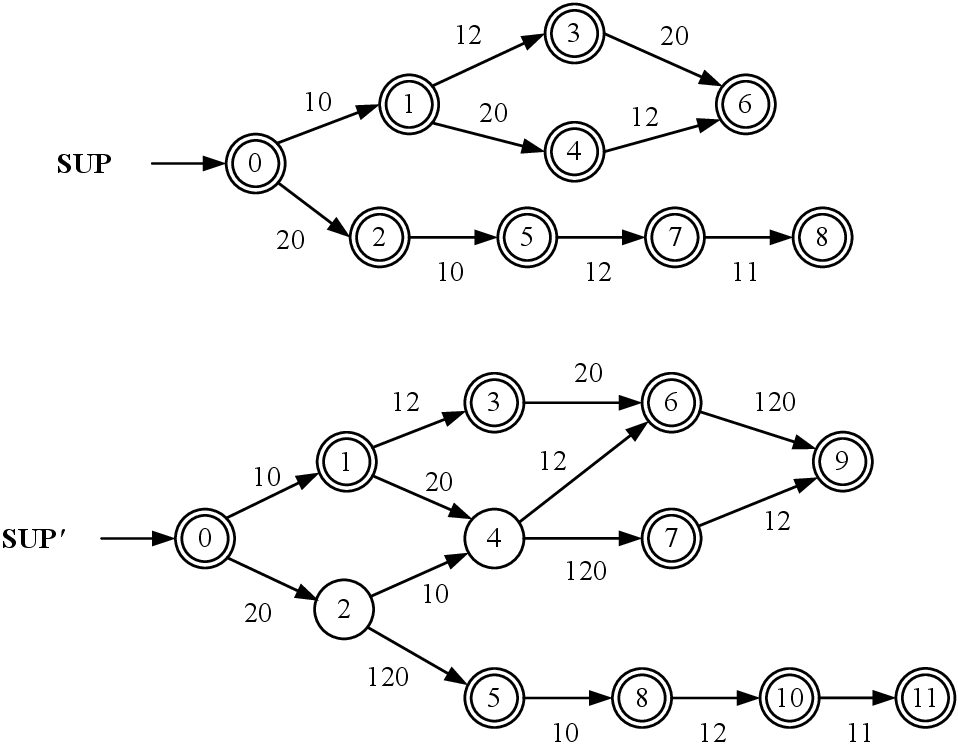}\\
  \caption{Example~\ref{exmp1}: ${\bf SUP}$ and ${\bf SUP}'$}
  \label{fig11}
\end{figure}

Since $\bf SUP$ is a nonblocking supervisor, delay-robustness of
$\bf SUP$ also requires that ${\bf SUP}'$ be nonblocking, i.e.
\begin{equation} \label{e5}
\overline{L_m({\bf SUP'})} = L({\bf SUP}'),
\end{equation}
as can easily be derived from (\ref{e2c}),(\ref{e4a}) and
(\ref{e4b}). The following example shows that when delay-robustness
fails, transmission delay of $r$ can lead to blocking in ${\bf SUP}'$.

\begin{figure}[!t]
\centering
    \includegraphics[scale=0.5]{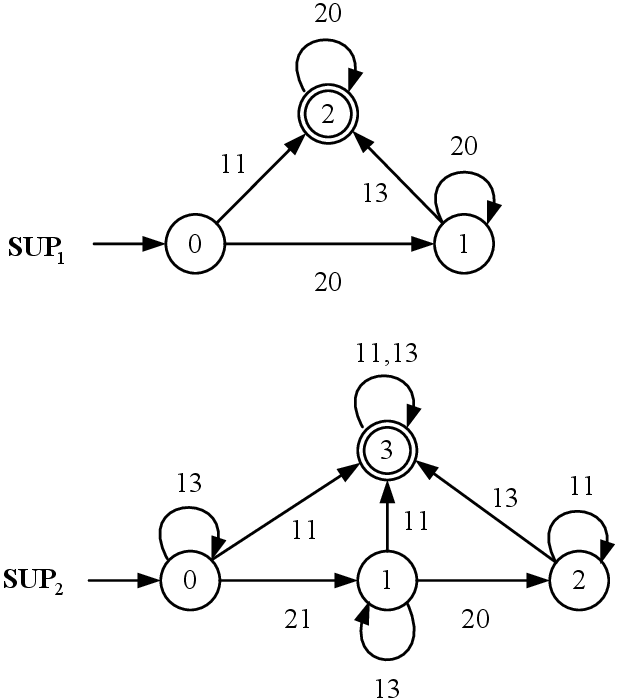}\\
  \caption{Example~\ref{exmp2}: ${{\bf SUP}_1}$ and ${{\bf SUP}_2}$}
  \label{fig2}
\end{figure}


\begin{example} \label{exmp2}
Let ${\bf SUP}_1$ and ${\bf SUP}_2$ be the generators shown in
Fig.~\ref{fig2}, and assume event 20 in ${\bf SUP}_2$ is exported to
${\bf SUP}_1$, i.e., $r = 20$ and $r' = 120$; ${\bf SUP}_1'$ is
obtained by replacing $20$ in ${\bf SUP}_1$ by $120$.  Then $\bf
SUP$ is nonblocking, but ${\bf SUP}'$ obtained by (\ref{e1a}) is
blocking, as shown in Fig. ~\ref{fig3}.  Note that delay-robustness
fails because (\ref{e4a}) fails. Indeed, string $21.20.11 \in L({\bf
SUP}')$ but $P(21.20.11)=21.20.11 \notin L({\bf SUP})$.
\begin{figure}[!t]
\centering
    \includegraphics[scale=0.5]{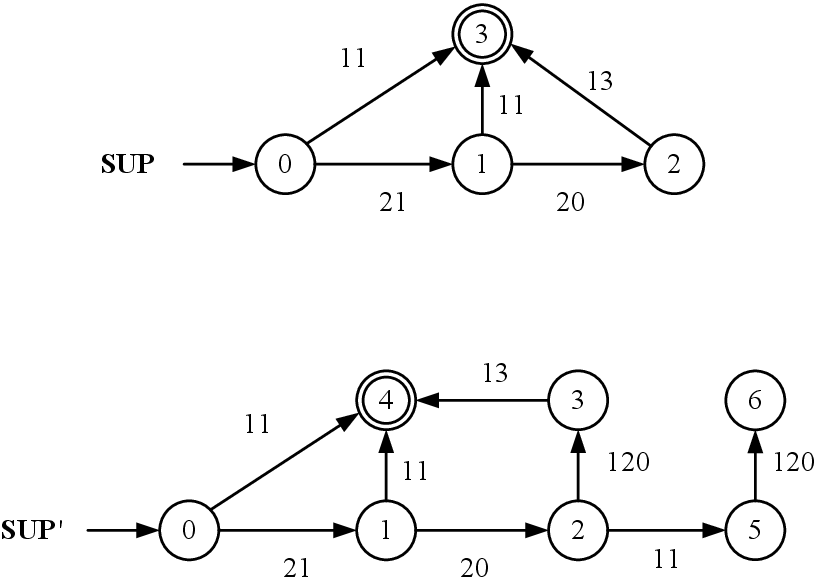}\\
  \caption{Example~\ref{exmp2}: ${\bf SUP}$ and ${\bf SUP}'$}
  \label{fig3}
\end{figure}
To see why ${\bf SUP}'$ is blocking, start from the initial state,
and suppose events 21 and 20 have occurred in ${\bf SUP}_2$ but that
${\bf SUP}_1'$ has not executed the corresponding event 120. Then
${\bf SUP}_1'$ may execute event 11, which is immediately observed
by ${\bf SUP}_2$; however, if $11$ occurs, ${\bf SUP}_1'$ and ${\bf
SUP}_2$ cannot accomplish their task synchronously; hence the system
blocks.
\end{example}

Given $\bf SUP$, $\Sigma_{ch}$, $\Sigma_{sig}$ and ${\bf SUP}'$, we
wish to verify whether or not $\bf SUP$ is delay-robust relative to
$\Sigma_{ch}$. For this we need the concept of ``supremal
quasi-congruence" \cite{Wonham:2011a,WongWonham:2004} and the
operator $Supqc$ \cite[Sect. 6.7]{Wonham:2011a} which projects a
given $\bf DES$ over the alphabet $\Sigma'$ to $\bf QCDES$, the
corresponding quotient $\bf DES$ over $\Sigma^* = P(\Sigma'^*)$. We
denote the counterpart computing procedure by \[{\bf QCDES} =
Supqc({\bf DES}, Null[])\] where $Null[]$ is the event subset
$\Sigma' - \Sigma$ that $P$ maps to the empty string $\epsilon$; for
details see \cite{Wonham:2011a}\footnote{This procedure can also be
phrased in terms of `bisimulation equivalence'\cite{Milner:89}, as
explained in \cite{WongWonham:2004}. We remark that the algorithm
for $Supqc({\bf DES}, \cdot)$ in \cite{Wonham:2011a}, Sect. 6.7, can
be estimated to have time complexity $O(kn^4)$ where $(k,n)$ is the
(alphabet, state) size of \textbf{DES}. We note that \cite{Bravo:2012}
reports an algorithm with quadratic time complexity for verifying
the observer property alone.}.
Let ${\bf QCDES} = (Z,\Sigma,\zeta,z_0, Z_m)$. In general $\bf QCDES$ will be
nondeterministic with transition function $\zeta: Z \times \Sigma^*
\rightarrow Pwr(Z)$ and include silent ($\epsilon-$) transitions. If
no silent or nondeterministic transitions happen to appear in $\bf
QCDES$, the latter is said to be `structurally deterministic'.
Formally, $\bf QCDES$ is {\it structurally deterministic} if, for
all $z\in Z$ and $s \in \Sigma^*$, we have
\[\zeta(z, s) \neq \emptyset \Rightarrow |\zeta(z, s)| = 1.\]

It is known that structural determinism of  $\bf QCDES$ is
equivalent to the condition that $P$ is an $L_m({\bf DES})$-observer
(cf. \cite{WongWonham:2004}, and \cite{Wonham:2011a}, Theorem
6.7.1).

Given minimal-state deterministic generators $\bf A$ and $\bf B$
over the same alphabet, we write ${\bf A} \subseteq {\bf B}$ iff
$L_m({\bf A}) \subseteq L_m({\bf B})$  and $L({\bf A}) \subseteq
L({\bf B})$; and ${\bf A} \approx {\bf B}$ to mean both $({\bf A}
\subseteq {\bf B})$ and $({\bf B} \subseteq {\bf A})$, i.e. ${\bf
A}$ and ${\bf B}$ are isomorphic. Clearly, ``$\approx$'' is
transitive.

Now let ${\bf SUP} = (X, \Sigma, \xi, x_0, X_m)$ (in (\ref{eq:SUP})),
${\bf SUP'} = (Y, \Sigma', \eta, y_0, Y_m)$ (in (\ref{e1a})),
\begin{align}
{\bf PSUP}' &= Project({\bf SUP}', Null[r']) \label{PSUP'}\\
{\bf QCSUP}' &= Supqc({\bf SUP}', Null[r']) \label{QCSUP'}.
\end{align}
Write ${\bf QCSUP}' = (\overline{Y}, \Sigma, \overline{\eta},
\overline{y}_0, \overline{Y}_m)$.

The following theorem provides an effective test for whether or not the communication delay is tolerable, i.e., $\bf SUP$ is delay-robust.

\begin{theorem}\label{thm1}
$\bf SUP$ is delay-robust relative to $\Sigma_{ch}$ ($ = \{r\}$) if and
only if ${\bf QCSUP}'$ is structurally deterministic, and isomorphic to $\bf SUP$.
\end{theorem}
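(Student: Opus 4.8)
The plan is to establish the theorem by connecting the three conditions of Definition~\ref{def1} to the structural properties of $\bf QCSUP'$, using the known equivalence (cited from \cite{WongWonham:2004,Wonham:2011a}) between structural determinism of a $Supqc$-quotient and the $L_m$-observer property, together with the defining behavioural identities $L_m({\bf QCSUP'}) = PL_m({\bf SUP'})$ and $L({\bf QCSUP'}) = PL({\bf SUP'})$ of the $Supqc$ operator. The first step is to recall these behavioural identities: by construction $Supqc({\bf SUP'},Null[r'])$ preserves the projected closed and marked behaviours, so $L({\bf QCSUP'}) = PL({\bf SUP'})$ and $L_m({\bf QCSUP'}) = PL_m({\bf SUP'})$, regardless of determinism. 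Hence conditions (\ref{e4a}) and (\ref{e4b}) of delay-robustness are \emph{literally} the statements $L({\bf QCSUP'}) = L({\bf SUP})$ and $L_m({\bf QCSUP'}) = L_m({\bf SUP})$.

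Next I would handle the observer condition (\ref{e2c}bis). Here the key observation is Remark~\ref{rem1}: when $PL_m({\bf SUP'}) = L_m({\bf SUP})$ — which, once (\ref{e4b}) is in force, holds — the observer property of $P$ with respect to $\bf SUP'$ and $\bf SUP$ coincides with the $L_m({\bf SUP'})$-observer property of $P$. By the cited Theorem~6.7.1 of \cite{Wonham:2011a}, the latter is equivalent to structural determinism of ${\bf QCSUP'} = Supqc({\bf SUP'}, Null[r'])$. So, modulo (\ref{e4b}), condition (\ref{e2c}bis) is equivalent to structural determinism of $\bf QCSUP'$.

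Now I assemble the two directions. For necessity: assume $\bf SUP$ is delay-robust. Then (\ref{e4a}), (\ref{e4b}) and (\ref{e2c}bis) hold. From (\ref{e2c}bis) and (\ref{e4b}), via Remark~\ref{rem1} and the cited theorem, $\bf QCSUP'$ is structurally deterministic; being structurally deterministic it is an ordinary (deterministic) DES over $\Sigma$, and (\ref{e4a}),(\ref{e4b}) say its closed and marked behaviours coincide with those of $\bf SUP$, hence (after minimization, by the definition of ``$\approx$'' recalled just above) it is isomorphic to $\bf SUP$. For sufficiency: assume $\bf QCSUP'$ is structurally deterministic and isomorphic to $\bf SUP$. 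Isomorphism gives $L({\bf QCSUP'}) = L({\bf SUP})$ and $L_m({\bf QCSUP'}) = L_m({\bf SUP})$, i.e. (\ref{e4a}) and (\ref{e4b}). Structural determinism gives, again by the cited theorem, the $L_m({\bf SUP'})$-observer property of $P$; and since (\ref{e4b}) now supplies $PL_m({\bf SUP'}) = L_m({\bf SUP})$, Remark~\ref{rem1} converts this into (\ref{e2c}bis). Thus all of (\ref{e4a})--(\ref{e2c}bis) hold and $\bf SUP$ is delay-robust by Definition~\ref{def1}. (Nonblocking of $\bf SUP'$, eq.~(\ref{e5}), need not be checked separately, as noted in the text it follows from (\ref{e2c}),(\ref{e4a}),(\ref{e4b}).)

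The main obstacle I anticipate is the careful bookkeeping around the logical dependence between the observer condition and equation (\ref{e4b}): the equivalence ``$L_m$-observer $\Leftrightarrow$ structural determinism'' from \cite{Wonham:2011a} is phrased relative to $PL_m({\bf SUP'})$, not relative to $L_m({\bf SUP})$, so one must be scrupulous about \emph{first} securing $PL_m({\bf SUP'}) = L_m({\bf SUP})$ before invoking Remark~\ref{rem1} to identify the two forms of the observer property — in both directions of the proof. A secondary point requiring care is that ``isomorphic to $\bf SUP$'' presupposes $\bf QCSUP'$ is deterministic and minimal in the sense of the ``$\approx$'' relation defined above; so in the necessity direction one should minimize $\bf QCSUP'$ after establishing structural determinism, or note that $Supqc$ already returns a minimal quotient, before concluding isomorphism. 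Everything else is a direct substitution of the $Supqc$ behavioural identities into Definition~\ref{def1}.
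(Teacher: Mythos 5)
Your proposal is correct and follows essentially the same route as the paper's proof: both directions rest on (i) the fact that $\bf QCSUP'$ is a minimal-state generator for $PL_m({\bf SUP'})$ (the paper's Proposition~\ref{pro1}), (ii) the cited equivalence between structural determinism of the $Supqc$ quotient and the $L_m({\bf SUP'})$-observer property, and (iii) Remark~\ref{rem1} to identify the two observer notions once $PL_m({\bf SUP'})=L_m({\bf SUP})$ is secured. The one caveat is your claim that $L({\bf QCSUP'})=PL({\bf SUP'})$ and $L_m({\bf QCSUP'})=PL_m({\bf SUP'})$ hold ``regardless of determinism'': the paper establishes these identities only under the hypothesis of structural determinism (Proposition~\ref{pro1}, whose proof in Appendix~\ref{appA} invokes determinism repeatedly), but since in both of your directions determinism is already in hand before you use them, the overstatement is harmless.
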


As indicated above, ${\bf QCSUP}'$ can be computed by $Supqc$ and
isomorphism of DES can be verified by $Isomorph$.\footnote{ For
language equality {\it Isomorph} should be applied to minimal
(Nerode) state DES; see e.g. \cite{Wonham:2011a} Sect. 3.7.} Hence,
Theorem~\ref{thm1} provides an effective computational criterion for
delay-robustness. Before Theorem~\ref{thm1} is proved, a special
relation between ${\bf QCSUP}'$ and ${\bf PSUP}'$ must be
established; a proof is in Appendix~\ref{appA}.

\begin{proposition} \label{pro1}
If ${\bf QCSUP}'$ is structurally deterministic, then it is a canonical (minimal-state) generator for $PL_m({\bf SUP}')$.
\end{proposition}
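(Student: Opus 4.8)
The plan is to deduce the proposition from the fact that $\bf PSUP' = Project({\bf SUP'}, Null[r'])$ is, by the very definition of the $Project$ operator, a state-minimal (hence reachable and Nerode-reduced, hence canonical) deterministic generator carrying exactly the behavior pair $L({\bf PSUP'}) = PL({\bf SUP'})$, $L_m({\bf PSUP'}) = PL_m({\bf SUP'})$. Since the minimal deterministic generator of a prescribed (closed, marked) behavior pair is unique up to isomorphism, it suffices to show that, whenever $\bf QCSUP'$ is structurally deterministic, $\bf QCSUP' \approx \bf PSUP'$; this simultaneously identifies $\bf QCSUP'$ as a canonical generator for $PL_m({\bf SUP'})$. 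Concretely I would verify three things about $\bf QCSUP'$: (i) it is an ordinary deterministic generator over $\Sigma$; (ii) $L({\bf QCSUP'}) = PL({\bf SUP'})$ and $L_m({\bf QCSUP'}) = PL_m({\bf SUP'})$; and (iii) $\bf QCSUP'$ is reachable and reduced, i.e.\ no two distinct states are Nerode-equivalent.

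Items (i) and (ii) are largely bookkeeping. By hypothesis $\bf QCSUP'$ is structurally deterministic, so it has no silent ($\varepsilon$-) transitions and no nondeterministic branching, which is (i). For (ii), I would invoke the defining properties of $Supqc$ and the supremal quasi-congruence (\cite{Wonham:2011a}, Sect.~6.7; \cite{WongWonham:2004}), together with the equivalence noted just above the proposition between structural determinism of $\bf QCSUP'$ and $P$ being an $L_m({\bf SUP'})$-observer: the quotient then represents $\bf SUP'$ faithfully through $P$, so its closed behavior is $PL({\bf SUP'})$ and its marked behavior is $PL_m({\bf SUP'})$ without enlargement. Reachability in (iii) is also immediate, since every state of $\bf QCSUP'$ is the $\rho$-class of a reachable state of $\bf SUP'$ (with $\rho$ the supremal quasi-congruence), and $Supqc$ returns only the reachable part.

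The substantive step, and the expected main obstacle, is the ``reduced'' half of (iii). Suppose for contradiction that $\bf QCSUP'$ has distinct states $\bar y_1 \neq \bar y_2$ with the same future closed and marked behavior. I would pull this coincidence back to $\bf SUP'$: enlarge $\rho$ to the equivalence $\rho'$ obtained by fusing the two blocks $\bar y_1$ and $\bar y_2$ into one, leaving every other block unchanged, and check that $\rho'$ is still a quasi-congruence. Determinism of $\bf QCSUP'$ makes this check transparent — the two blocks have, event by event, equal $\Sigma$-successor blocks (each successor pair being either identical or Nerode-inequivalent, hence undisturbed by the fusion) and the same ``can reach a marked state'' status — so the look-alike-to-look-alike transition clause and the marking/observer clause in the definition of quasi-congruence survive the fusion. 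Then $\rho' \supsetneq \rho$ contradicts the supremality of $\rho$, forcing $\bf QCSUP'$ to be reduced, whence $\bf QCSUP' \approx \bf PSUP'$ and the proposition follows. The delicate point is matching the informal Nerode-equivalence reasoning against the precise definition of quasi-congruence employed by $Supqc$, especially its marking/observer clause; an alternative that stays closer to the cited literature is to read the conclusion off directly, since (\cite{Wonham:2011a}, Theorem~6.7.1) already characterizes the structurally deterministic $Supqc$ quotient as the canonical generator of the projected marked language.
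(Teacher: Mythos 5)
Your overall strategy coincides with the paper's own proof in Appendix~A: there, too, one first shows $L({\bf QCSUP'}) = PL({\bf SUP'})$ and $L_m({\bf QCSUP'}) = PL_m({\bf SUP'})$ (your item (ii), which the paper establishes by a careful four-inclusion induction rather than by citation of $Supqc$'s properties), and then proves state-minimality by showing that any further Nerode-style identification of states of ${\bf QCSUP'}$ would pull back to a quasi-congruence on $Y$ coarser than $\rho$, contradicting supremality of $\rho$. So the central idea --- supremality of $\rho$ forbids the quotient from containing two distinct states with identical future behavior --- is the same in both arguments.

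However, your execution of that central step has a genuine gap. You fuse only the two offending blocks $\bar y_1, \bar y_2$ and claim the enlarged relation $\rho'$ is still a quasi-congruence because ``each successor pair is either identical or Nerode-inequivalent.'' That dichotomy is unjustified: since $\bar y_1$ and $\bar y_2$ are Nerode-equivalent, their $\sigma$-successors are Nerode-equivalent as well, and nothing rules out that those successors form a \emph{distinct} Nerode-equivalent pair $\bar z_1 \neq \bar z_2$ different from $\{\bar y_1, \bar y_2\}$; in that case $\rho'$ fails the look-alike-to-look-alike transition condition. Asserting that no such other pair exists is precisely the reducedness you are trying to prove, so the argument as written is circular. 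The repair is exactly what the paper does: let $\nu$ be the \emph{full} Nerode congruence on $\overline{Y}$ (fusing all equivalent pairs simultaneously, so that Nerode-equivalence of successors is automatically respected by the fusion), verify that $\nu\circ\rho$ is a quasi-congruence on $Y$, and conclude from supremality of $\rho$ that $\nu = \bot$. Your fallback of reading the conclusion off (\cite{Wonham:2011a}, Theorem~6.7.1) does not close the gap either: as invoked in this paper, that theorem only equates structural determinism of the quotient with $P$ being an $L_m({\bf SUP'})$-observer; it does not by itself assert canonicity, which is why Proposition~\ref{pro1} is given its own proof.
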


\begin{proof}[Proof of Theorem~\ref{thm1}]
(If) From Proposition~\ref{pro1}, ${\bf QCSUP}'$ is a minimal state
generator of $PL_m({\bf SUP}')$. So, ${\bf QCSUP}' \approx {\bf
PSUP}'$. As ${\bf QCSUP}'$ is isomorphic to $\bf SUP$, ${\bf QCSUP}'
\approx {\bf SUP}$. Hence, ${\bf SUP} \approx {\bf PSUP}'$, i.e.
(\ref{e4a}) and (\ref{e4b}) both hold. For (\ref{e2c}), since ${\bf
QCSUP}'$ is structurally deterministic\cite[Theorem 6.7.1]{Wonham:2011a},
$P$ is an $L_m({\bf SUP}')$-observer; by Remark~\ref{rem1}
and (\ref{e4b}), $P$ has the observer property with respect to ${\bf
SUP}'$ and $\bf SUP$.  Thus by Definition~\ref{def1}, $\bf SUP$ is
delay-robust relative to $\Sigma_{ch}$.

(Only if)  By Remark~\ref{rem1}, conditions (\ref{e2c}) and
(\ref{e4b}) imply that $P$ is an $L_m({\bf SUP}')$-observer; thus
${\bf QCSUP}'$ is deterministic\cite{Wonham:2011a}. By
Proposition~\ref{pro1}, ${\bf QCSUP}' \approx {\bf PSUP}'$.
Equations (\ref{e4a}) and (\ref{e4b}) say that ${\bf PSUP}' \approx
{\bf SUP}$. Hence ${\bf QCSUP}' \approx {\bf SUP}$. Finally, we
conclude that ${\bf QCSUP}'$ is isomorphic to $\bf SUP$.
\end{proof}

\begin{remark} \label{rem:3 state channel}
In our 2-state channel model ${\bf CH}(2,r,1)$, the delay of (forward)
event transmission and the delay of (backward) acknowledgement are
lumped into one, as represented by $r'$. Here we consider a 3-state channel
model ${\bf TCH}(2,r,1)$, as shown in Fig.~\ref{fig:3stateChn}, where $r_{21}'$
signals to ${\bf LOC_1}$ the occurrence of $r$ in ${\bf G_2}$, while $r_{12}'$
represents an acknowledgement to ${\bf LOC_2}$ that ${\bf LOC_1}$ has received
the occurrence of $r$. We show in the following that: if $\bf SUP$ is delay-robust
relative to $r$ with respect to ${\bf CH}(2,r,1)$, then $\bf SUP$ is delay-robust
relative to $r$ with respect to ${\bf TCH}(2,r,1)$.

\begin{figure}[!t]
\centering
  \includegraphics[scale=0.6]{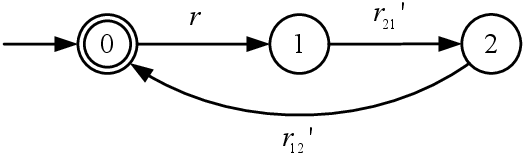}\\
  \caption{3-state Communication Channel Model ${\bf TCH}(2,r,1)$}
  \label{fig:3stateChn}
\end{figure}

Here in the transition structure of ${\bf LOC}_1$, hence also of ${\bf SUP}_1$,
we replace every instance of event $r$ with $r_{21}'$; call these modified
models ${\bf TLOC}_1'$ and ${\bf TSUP}_1'$. If and when $r$ happens to occur,
${\bf TCH}(2,r,1)$ is driven by synchronization from its initial state 0 into
state 1; the execution of event $r_{21}'$ represents that ${\bf TSUP}_1'$ has
`known' the occurrence of $r$, and the channel is brought into state 2 by synchronization;
the execution of $r_{12}'$ acknowledges that ${\bf TSUP}_1'$ has received the
occurrence of $r$ and resets the channel.

Now, the channeled behavior of the system with respect to the channel ${\bf TCH}(2,r,1)$
is
\begin{align}
{\bf TSUP}' &= Sync({\bf TSUP}_1', {\bf TCH}(2,r,1), {\bf SUP}_2)
\end{align}
and its alphabet is $\Sigma_T'= \Sigma \cup \{r_{21}',r_{12}'\}$. We prove in Appendix
\ref{app0} that:
\begin{proposition} \label{pro:relchn}
$\bf SUP$ is delay-robust
relative to $r$ with respect to ${\bf CH}(2,r,1)$, iff $\bf SUP$ is delay-robust
relative to $r$ with respect to ${\bf TCH}(2,r,1)$.
\end{proposition}
\end{remark}

We have now obtained an effective tool to determine whether or not $\bf SUP$ is delay-robust relative to $\Sigma_{ch} = \{r\}$. If $\bf SUP$ is not delay-robust relative to $r$, we say that $r$ is {\it delay-critical} for $\bf SUP$. In that case, communication of $r$ (with delay, as $r'$) could result in violation of a specification. If $r$ is delay-critical, and if such violation is inadmissible, then $r$ must be transmitted instantaneously to the agent (in this case, ${\bf LOC}_1$) that imports it -- where ``instantaneous" must be quantified on the application-determined time scale.

\subsection{Delay-robustness for Multiple Events} \label{sec:3.2}

In this subsection, we consider delay-robustness for multiple
events. First, we adopt the result of Theorem~\ref{thm1} as the
basis of a new (though equivalent) definition and extend
delay-robustness naturally to multiple events. Then we prove that
delay-robustness for a set $R_2$ (of multiple events) implies that
delay-robustness holds for any subset of $R_2$.

\begin{definition}\label{def2}
Let $R_2 \subseteq {\Sigma_2}$ be a subset of events $r$ imported from ${\bf G}_2$ by ${\bf LOC}_1$ via their corresponding channels ${\bf CH}(2,r,1)$ (i.e. $\Sigma_{ch} = R_2$), and let ${\bf SUP}_1$ be modified to ${\bf SUP}_1'$ by replacing each $r$ by its transmitted version $r'$ as before. Let
\[{\bf SUP}' := Sync({\bf SUP}_1',\{{\bf CH}(2,r,1)| r\in R_2\}, {\bf SUP}_2).\]
Then $\bf SUP$ is {\it delay-robust relative to the event subset}
$R_2$ provided $Supqc({\bf SUP}', \\Null[\{r'|r\in R_2\}])$ is
isomorphic to $\bf SUP$.
\end{definition}


Note that the property of $\bf SUP$ described in Definition~\ref{def2} is stricter than in Definition~\ref{def1}: that $\bf SUP$ is delay-robust with respect to each event $r \in R_2$ taken separately does not imply that $\bf SUP$ is delay-robust with respect to $R_2$ as a subset; however, that $\bf SUP$ is delay-robust with respect to $R_2$ does imply that $\bf SUP$ is delay-robust with respect to each separate event $r \in R_2$. The former statement will be confirmed by Example~\ref{exmp2a} and the latter by Theorem~\ref{thm2}.

\begin{figure}[!t]
\centering
    \includegraphics[scale=0.5]{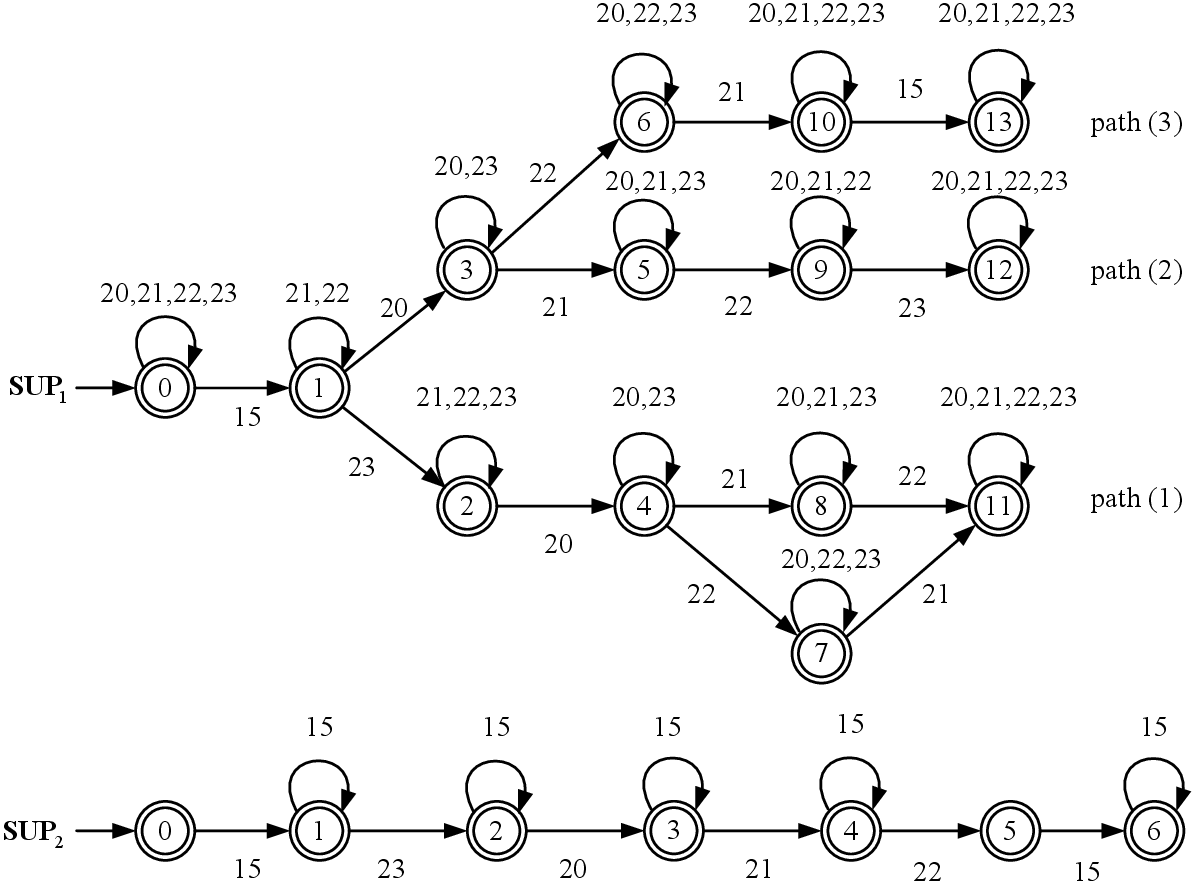}\\
\caption{Example~\ref{exmp2a}: ${{\bf SUP}_1}$ and
${{\bf SUP}_2}$} \label{exmp2a:fig1}
\end{figure}
\begin{figure}[!t]
\centering
    \includegraphics[scale=0.5]{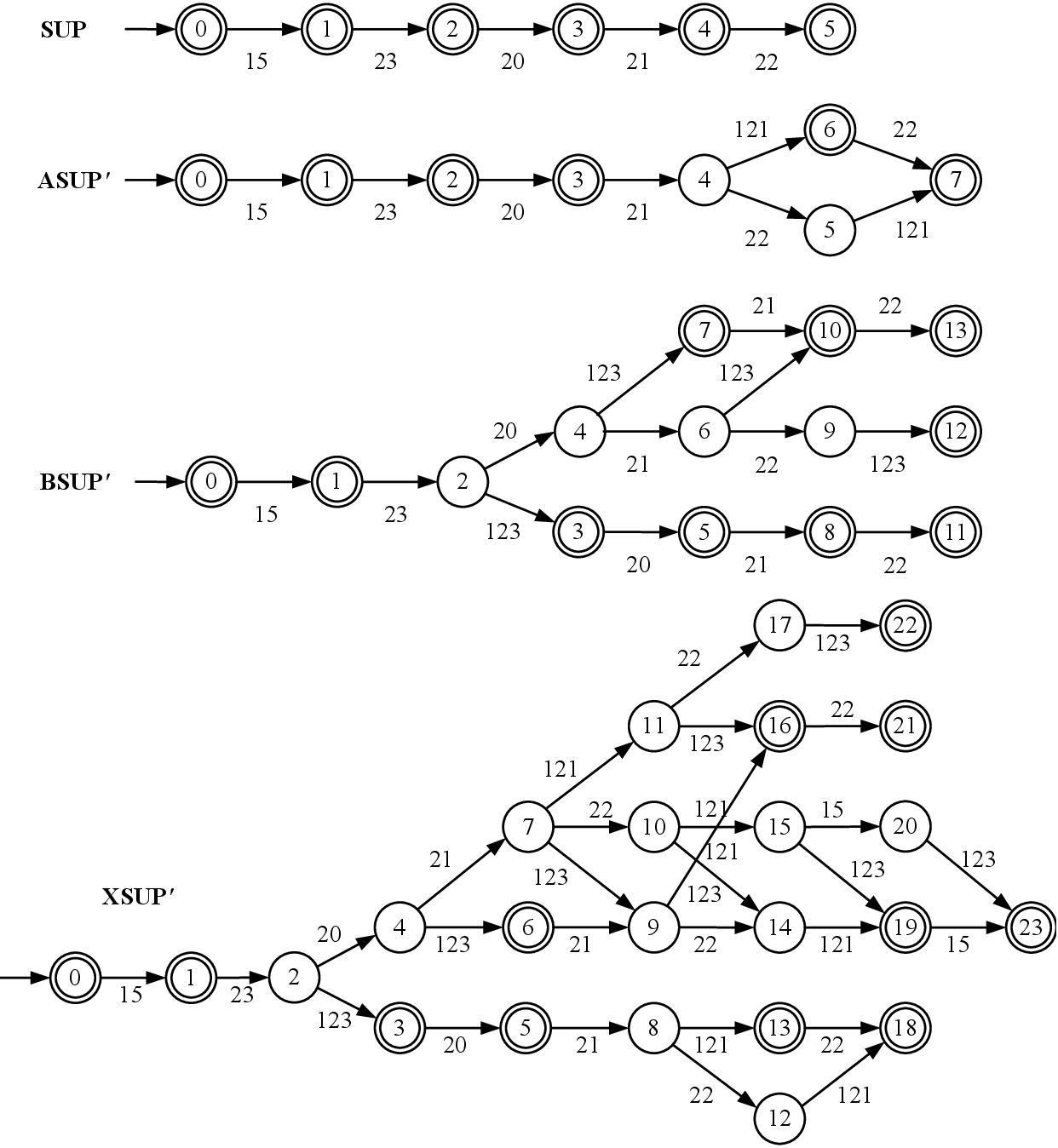}\\
  \caption{Example~\ref{exmp2a}: ${\bf SUP}$, ${\bf ASUP}'$, ${\bf BSUP}'$ and ${\bf XSUP}'$}
  \label{exmp2a:fig2}
\end{figure}

\begin{example} \label{exmp2a}
In this example $\bf SUP$ is delay-robust with respect to events 21
and 23 separately, but is not delay-robust with respect to the event
set $\{21, 23\}$. Let ${\bf SUP}_1$ and ${\bf SUP}_2$ be the
generators shown in Fig.~\ref{exmp2a:fig1}, where events 20,21,22,23
in ${\bf SUP}_2$ are exported to ${\bf SUP}_1$ and event 15 in ${\bf
SUP}_1$ is exported to ${\bf SUP}_2$. Let events 21 and 23 be
transmitted by communication channel ${\bf CH}(2,21,1)$ (with signal
event 121) and ${\bf CH}(2,23,1)$ (with signal event 123) respectively.
Let ${\bf ASUP}_1'$ (resp. ${\bf BSUP}_1'$) be obtained by replacing
$21$ (resp. 23) in ${\bf SUP}_1$ by $121$ (resp. 123) and ${\bf
XSUP}_1'$ be obtained by simultaneously replacing 21 and 23 in
${\bf SUP}_1$ by 121 and 123. Let
\begin{align*}
{\bf SUP} &= Sync({\bf SUP}_1, {\bf SUP}_2) \\
{\bf ASUP}' &= Sync({\bf ASUP}_1', {\bf CH}(2,21,1), {\bf SUP}_2) \\
{\bf BSUP}' &= Sync({\bf BSUP}_1', {\bf CH}(2,23,1), {\bf SUP}_2) \\
{\bf XSUP}' &= Sync({\bf XSUP}_1', {\bf CH}(2,21,1),  {\bf CH}(2,23,1), {\bf SUP}_2),
\end{align*}
as shown in Fig.~\ref{exmp2a:fig2}. One can verify that both
$Supqc({\bf ASUP}', Null[121])$ and $Supqc({\bf BSUP}', Null[123])$
are isomorphic to $\bf SUP$, i.e. $\bf SUP$ is delay-robust with
respect to 21 and 23 separately. However, $\bf SUP$ is not
delay-robust with respect to the event set $\{21, 23\}$. Take
\begin{align*}
s = 15.23.20.123.21.22.121.15.
\end{align*}
As in Fig.~\ref{exmp2a:fig2}, $s \in L({\bf XSUP}')$, but by
projecting out 121 and 123,
\begin{align*}
Ps = 15.23.20.21.22.15 \notin L({\bf SUP}),
\end{align*}
which implies that $PL({\bf XSUP}') \nsubseteq L({\bf SUP})$ (where
$P$ is the natural projection which projects 121 and 123 to the
empty string $\epsilon$).

Intuitively, one sees from Fig.~\ref{exmp2a:fig1} that ${\bf SUP}_1$
at its state 1 has three paths to choose from: paths (1) and (2) are
`safe', but path (3) is `dangerous' (because event 15 will occur,
which violates $\bf SUP$'s behavior). Which path ${\bf SUP}_1$ chooses
depends on the events imported from ${\bf SUP}_2$. If event 21 alone
is delayed, ${\bf SUP}_1$ can choose only path (1); if event 23 alone
is delayed, ${\bf SUP}_1$ can choose either path (1) or (2); thus
delaying 21 and 23 individually leads only to `safe' paths. If,
however, events 21 and 23 are both delayed, ${\bf SUP}_1$ can choose
any of the three paths including the `dangerous' path (3).

\end{example}


Before addressing delay-robustness for event subsets, we extend our
definition to the general case with $n$ agents ${\bf G}_j$ ($j \in N
= \{1, 2, ..., n\}$), each with local controller ${\bf LOC}_j$ which
imports channeled events $\Sigma_{ch}(i,j) \subseteq \Sigma_i$ from
${\bf G}_i$ ($i \in I_j \subset N$). For this configuration we
employ binary channels as before, one for each $r \in \Sigma_{
ch}(i,j)$. Thus an event $r \in \Sigma_i$ that is channeled to both
${\bf LOC}_j$ and ${\bf LOC}_k$ will employ separate channels ${\bf
CH}({i,r,j})$ and ${\bf CH}({i,r,k})$. Here the channels
${\bf CH}({i,r,j})$ and ${\bf CH}({i,r,k})$ are distinct (see
Fig.~\ref{fig:multichn}): we use different signal events $r_j'$ and
$r_k'$ corresponding to $r$ in ${\bf CH}({i,r,j})$ and ${\bf CH}({i,r,k})$,
respectively; in this way, the channeled event $r$ may be received by
${\bf LOC}_j$ and ${\bf LOC}_k$ in either order and with unspecified
delays. Of course $r$ might also be communicated (but with zero delay)
from ${\bf G}_i$ to other local controllers ${\bf LOC}_l$ with $l \neq j, k$.

\begin{figure}[!t]
\centering
    \includegraphics[scale=0.5]{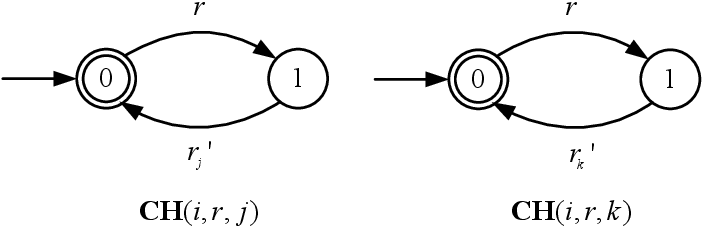}\\
\caption{\small ${\bf CH}(i,r,j)$ and ${\bf CH}(i,r,k)$, with
distinct signal events $r'_j$ and $r'_k$} \label{fig:multichn}
\end{figure}

For this architecture, Definition~\ref{def2} is generalized in the
obvious way. For each $j \in N$ we compute ${\bf SUP}_j'$ by
relabeling each event $r$ that appears in ${\bf SUP}_i$, such that
$r \in \Sigma_{ch}(i,j)$ ($i \in I_j$), by its channeled output
$r'$. Since $\Sigma_{ch}(i,j) \subseteq \Sigma_i$ and the $\Sigma_i$
are pairwise disjoint, this relabeling is unambiguous. Then we
compute
\begin{align} \label{eq:SUP'}
{\bf SUP}' = Sync({\bf SUP}_j', {\bf CH}({i,r,j}) \ |\ r \in
\Sigma_{ch}(i,j), i \in I_j, j \in N)
\end{align}
Note that if for some $j$, $I_j = \emptyset$, i.e. ${\bf LOC}_j$
imports no events from other agents ${\bf G}_i$, $i \neq j$, then
${\bf SUP}_j'={\bf SUP}_j$.

With ${\bf SUP} = Sync({\bf SUP}_j \ |\ j \in N)$, we have the
following definition.

\begin{definition}\label{def_allevents}
$\bf SUP$ is \emph{delay-robust for distributed control of $n$
agents by localization} provided the projected channeled behavior
\begin{align} \label{eq:def_allevents}
Supqc({\bf SUP}', Null\{r'| r\in \Sigma_{ch}(i,j), i \in I_j, j \in
N\})
\end{align}
is deterministic, and isomorphic with $\bf SUP$.
\end{definition}

The justification of this definition is merely a repetition of the
argument for two agents based on the conditions (\ref{e4a}),
(\ref{e4b}) and (\ref{e2c}bis). Once the obvious generalization of
${\bf SUP}'$ has been framed, as above, the basic conditions just
referenced are fully defined as well, and require no formal change.
The final result in terms of $Supqc$ is derived exactly as before.

We note that to verify delay-robustness in
Definition~\ref{def_allevents} we need to compute ${\bf SUP}'$ as in
(\ref{eq:SUP'}). The computation may be expensive when there is a
large number of communication channels. Nevertheless ${\bf SUP}'$ is
implemented in a purely distributed fashion: distributed supervisors
and communication channels. We shall investigate the computational
issue of ${\bf SUP}'$ in our future work, one promising approach
being to use \emph{State Tree Structures} \cite{MaWon:05}. We also
note in passing that all the above results can be extended to
decentralized controllers; for details see Appendix~\ref{appB}.

In the foregoing notation now suppose that $\bf SUP$ is known to be
delay-robust for a set of binary channels ${\bf CH}({i,r,j})$ with
$i \in I_j$, $j \in N$, and $r$ in some subset $\Sigma_{ch}(i,j)
\subseteq \Sigma_i$. We shall prove that $\bf SUP$ \emph{remains
delay-robust when any one of these channels is replaced by the ideal
channel with zero transmission delay}.  As a corollary,
delay-robustness is preserved if the given set $\Sigma_{ch}(i,j)$
of channeled events from ${\bf G}_i$ to ${\bf LOC}_j$ is
replaced by any subset. Focussing attention on ${\bf
SUP}_1=Sync({\bf G}_1,{\bf LOC}_1)$, consider its environment
$E=\{{\bf SUP}_2,\ldots,{\bf SUP}_N\}$ with ${\bf SUP}_E :=
Sync\{{\bf SUP}_i \ |\ i=2,\ldots,N\}$. We assume that $E$ is
augmented to a channeled version $E'$ (say) having internal channels
${\bf CH}(i,r_{ij},j)$ $(i,j=2,...,N, i \neq j, r_{ij} \in
\Sigma_{ch}(i,j))$, together with outgoing external channels ${\bf
CH}(j,r_{j1},1)$ to ${\bf LOC}_1$ and incoming external channels
${\bf CH}(1,r_{1i},i)$ from ${\bf G}_1$. Denote the totality of
$E$'s internal channels, along with those from ${\bf G}_1$, by ${\bf
CH}_E$. Write ${\bf SUP}_{E'} := Sync({\bf SUP}'_2,...,{\bf
SUP}'_N,{\bf CH}_E)$ where ${\bf SUP}'_j$ is ${\bf SUP}_j$ with any
event $r \in \Sigma_{ch}(i,j)$ replaced by $r'$ $(i=1,...,N;
j=2,...,N; i \neq j)$ as prescribed before. For the alphabet of
${\bf SUP}_{E'}$ we have
\begin{align*}
\Sigma_{E'} = \cup \{ \Sigma_i \ |\ i=2,...,N\} \ \cup\ \{r' \ |\ r
\in \Sigma_{ch}(i,j); i=1,...,N, j=2,...,N, i \neq j\}.
\end{align*}
Similarly let ${\bf SUP}'_1$ denote ${\bf SUP}_1$ with channeled
events $r_{j1} \in \Sigma_{ch}(j,1)$ $(j=2,...,N)$ replaced by
$r'_{j1}$, and let $\Sigma'_1$ denote the corresponding alphabet. By
assumption the alphabets $\Sigma_i$ $(i=1,...,N)$ are pairwise
disjoint, hence the $\Sigma_{ch}(j,1)$ $(j=2,...,N)$ together with
$\Sigma_1$ are pairwise disjoint.  Write
\begin{align*}
\Sigma_{ch}(E,1) = \cup \{\Sigma_{ch}(j,1) \ |\ j=2,...,N\}.
\end{align*}
For clarity assume $\Sigma_{ch}(E,1)=\{\alpha, \beta\}$; the
extension to more than two events will be evident.  Thus $\alpha$,
$\beta$ are the channeled events imported to ${\bf LOC}_1$ from its
environment ${\bf SUP}_E$ (actually ${\bf SUP}_{E'}$), and appear in
${\bf SUP}'_1$ as $\alpha'$, $\beta'$. We can therefore write ${\bf
SUP}'$ in (\ref{eq:def_allevents}) in more detail as
\begin{align*}
{\bf SUP}' = Sync ({\bf SUP}'_1, {\bf CH}(E,\alpha,1),{\bf
CH}(E,\beta,1),{\bf SUP}_{E'}).
\end{align*}
Notice that $\alpha$, $\beta$ belong to $\Sigma_E := \Sigma_2 \cup
\cdots \cup \Sigma_N$ but not $\Sigma_1$, whereas $\alpha'$,
$\beta'$ appear in ${\bf SUP}'_1$ and the two channels but not in
${\bf SUP}_{E'}$.

Now denote by ${\bf SUP}''$ the structure ${\bf SUP}'$ but with the
channel ${\bf CH}(E,\alpha,1)$ replaced by one with zero delay (and
so eliminated from the channel formalism). Thus
\begin{align*}
{\bf SUP}'' = Sync ({\bf SUP}''_1, {\bf CH}(E,\beta,1),{\bf
SUP}_{E'})
\end{align*}
where ${\bf SUP}''_1$ is ${\bf SUP}_1$ with $\beta$ replaced by
$\beta'$ (but $\alpha$ left unchanged). We shall prove the following
result.

\begin{theorem} \label{thm2}
If $\bf SUP$ is delay-robust with respect to the channel structure
of ${\bf SUP}'$, then it remains so with respect to that of ${\bf
SUP}''$.
\end{theorem}

The assertion is almost obvious from the intuition that the
statement for ${\bf SUP}''$ should be derivable by ``taking the
limit'' at which ${\bf CH}(E,\alpha,1)$ operates with zero delay,
namely by replacing the communication event $\alpha$, when
unchanneled, with the zero-delay channeled version $\alpha.\alpha'$,
and finally projecting out $\alpha'$. A proof is given in
Appendix~\ref{appC}.



\subsection{Blocking of Uncontrollable Events} \label{sec:3.3}


The foregoing discussion of delay robustness covers channeled
events in general, regardless of their control status, and
is adequate if all channeled events happen to be controllable.
In the case of uncontrollable channeled events, however,
we must additionally examine whether channel delay violates
the conventional modeling assumption that uncontrollable events
may occur spontaneously at states where they are enabled and
should not be subject to external disablement. 

In our simplified model the transmission of $r$ from ${\bf G}_2$ to
${\bf LOC}_1$ is completed (by event $r'$) with indefinite (unbounded)
delay. A constraint imposed on ${\bf SUP}'$ by the channel ${\bf CH}(2,r,1)$
is that $r$ cannot occur again until $r'$ has reset ${\bf CH}(2,r,1)$ and
the communication cycle is ready to repeat. If $r$ is controllable
its re-occurrence can be disabled and hence delayed until after the
occurrence of $r'$ corresponding to the previous occurrence of $r$.
If, however, $r$ is uncontrollable, then once it is re-enabled (by
entrance of ${\bf SUP}_2$ to a state where $r$ is defined) its
re-occurrence cannot be externally delayed, according to the usual
modeling assumption on uncontrollable events. In this sense the
introduction of ${\bf CH}(2,r,1)$ could conceivably conflict with the
intention of the original DES model. To address this issue we
examine whether or not communication delay of an
uncontrollable event might violate a modeling assumption.




\begin{example} \label{exmp3}
For illustration, let ${\bf SUP}_1$ and ${\bf SUP}_2$ be the
generators shown in Fig.~\ref{fig12}. Assume event 20 in ${\bf
SUP}_2$ is exported to ${\bf SUP}_1$, i.e., $r = 20$ and $r' = 120$;
${\bf SUP}_1'$ is obtained by replacing $20$ in ${\bf SUP}_1$ by
$120$.  As shown in Fig.~\ref{fig13}, ${\bf SUP}' = Sync({\bf
SUP}_1', {\bf CH}(2,20,1), {\bf SUP}_2)$ is easily verified to be
delay-robust with respect to event $20$. Define ${\bf NSUP} =
Sync({\bf SUP}_1', {\bf SUP}_2)$. Let $s = 20$; then $s.20 \in
L({\bf NSUP})$, but $s.20 \notin L({\bf SUP}')$. Since ${\bf SUP}' =
Sync({\bf NSUP}, {\bf CH}(2,20,1))$, event 20 is blocked by ${\bf
CH}(2,20,1)$.
\end{example}
\begin{figure}[!t]
\centering
    \includegraphics[scale=0.55]{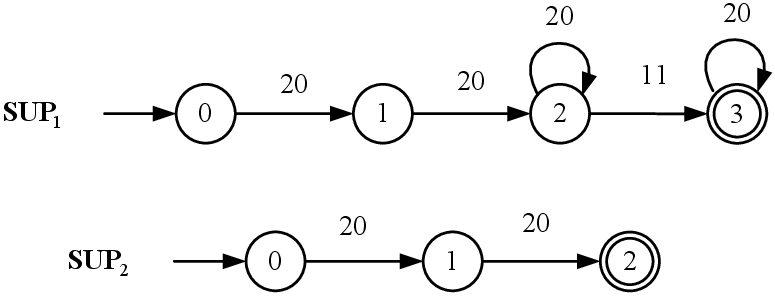}\\
  \caption{Example~\ref{exmp3}: ${{\bf SUP}_1}$ and ${{\bf SUP}_2}$}
  \label{fig12}
\end{figure}

\begin{figure}[!t]
\centering
    \includegraphics[scale=0.55]{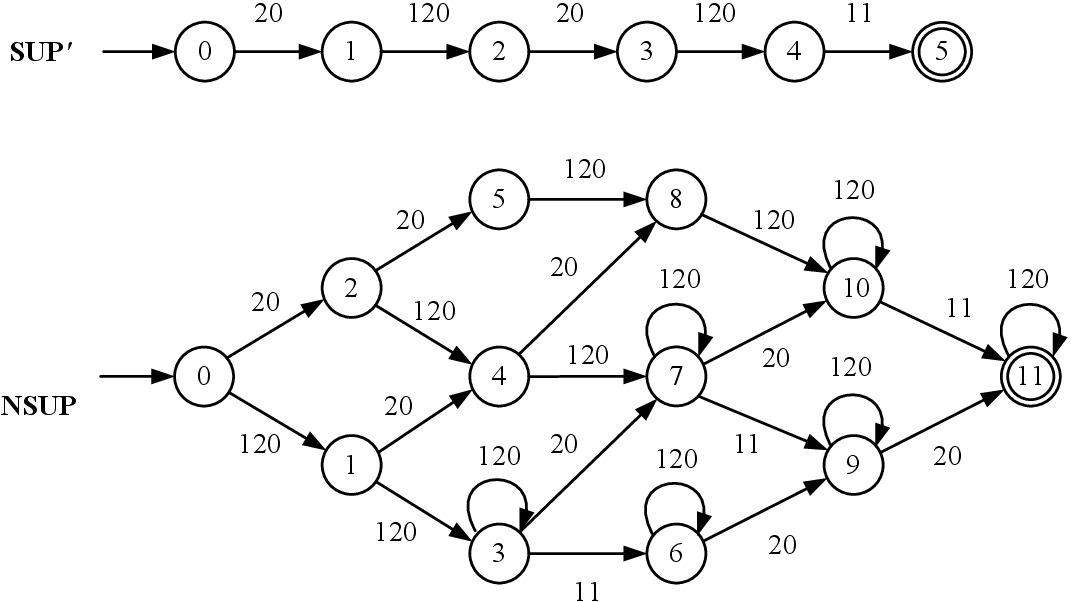}\\
  \caption{Example~\ref{exmp3}: ${\bf SUP}'$ and ${\bf NSUP}$}
  \label{fig13}
\end{figure}

This example shows a case where the reoccurrence of an
uncontrollable event is `blocked' by its channel, which demonstrates
that communication delay of an uncontrollable event really violates
the modeling assumption that uncontrollable events cannot be
disabled by any external agent. Now let
\begin{equation} \label{e6a}
{\bf NSUP} = Sync({\bf SUP}_1', {\bf SUP}_2);
\end{equation}
then according to (\ref{e1a})
\begin{equation} \label{e6b}
{\bf SUP}' = Sync({\bf NSUP}, {\bf CH}(2,r,1)).
\end{equation}
As before, write $\Sigma' = \Sigma \cup \{r'\}$ for the
alphabet of ${\bf SUP}'$, let $P: \Sigma'^* \rightarrow \Sigma^*$
be the natural projection of $\Sigma'^*$ to $\Sigma^*$, and
define the new natural projection
$P_r: \Sigma'^* \rightarrow \{r, r'\}^*$.
Now, for given $\bf NSUP$ and ${\bf SUP}'$ as in (\ref{e6a}) and
(\ref{e6b}), and $r \in \Sigma_u$, if there exists $s\in L({\bf
SUP}')$ such that $sr \in L({\bf NSUP})$, but $sr \notin L({\bf
SUP}')$, then we say that $r$ is {\it blocked} by ${\bf
CH}(2,r,1)$.


To check whether or not $r$ is blocked by ${\bf CH}(2,r,1)$, we
check if $P_r^{-1}L({\bf CH}(2,r,1))$ is ${\bf NSUP}$-controllable
with respect to event $r$, i.e.
\[P_r^{-1}L({\bf CH}(2,r,1))r \cap L({\bf NSUP}) \subseteq P_r^{-1}L({\bf CH}(2,r,1)).\]
For this, we employ the standard algorithm that checks
controllability\cite{Wonham:2011a}; the algorithm has complexity
$O(mn)$ where $m$ and $n$ represent the state numbers of ${\bf
CH}(2,r,1)$ and ${\bf NSUP}$, respectively.\footnote{For the case
described in Section~\ref{sec:3.2} of transmitting multiple events
by separate channels, we use the same method to check if each event
$r$ is blocked. Specifically, we check if $P_r^{-1}L({\bf
CH}(i,r,j))$ is $\bf NSUP$-controllable with respect to $r$, where
$\bf NSUP$ denotes the behavior of the system excluding ${\bf
CH}(i,r,j)$.}

To summarize, for an uncontrollable event $r$, if $\bf SUP$ is
delay-robust (by Theorem~\ref{thm1}) and $r$ will not be blocked by
${\bf CH}(2,r,1)$ (by controllability checking algorithm), then
$\bf SUP$ is said to be `unbounded' delay-robust with respect to
$r$. Otherwise, there exists $s \in L({\bf SUP}')$ such that $sr \in
L({\bf NSUP})$, but $sr \notin L({\bf SUP}')$. Thus $r$ is blocked
by the channel, which could violate the modeling assumption that an
uncontrollable event should never be prohibited or delayed by an
external agent. However, if the occurrence of $r'$ is executed by
${\bf LOC}_1$ before the next occurrence of $r$, the controllers may
still achieve global optimal nonblocking supervision. In this case,
we say that $\bf SUP$ is `bounded' delay-robust with respect to
$r$.\footnote{One way to determine a delay bound in terms of number
of event occurrences is to find the shortest path between two
consecutive occurrences of event $r$ in $\bf SUP$. A more detailed
study of this issue is left for future research.}

We illustrate the foregoing results by an example adapted from
\cite{Wonham:2011a}.



\section{Example - WORKCELL} \label{sec:4}

\subsection{Model Description and Controller Design} \label{sec:4.1}


$\bf WORKCELL$ consists of $\bf ROBOT$, $\bf LATHE$ and $\bf FEEDER$, with three buffers, $\bf INBUF$, $\bf LBUF$ and $\bf SBBUF$, connected as in Fig.~\ref{fig4}. Labeled arrows denote synchronization on shared transitions (events) in the corresponding component DES.
\begin{figure}[!t]
\centering
\includegraphics[scale=0.5]{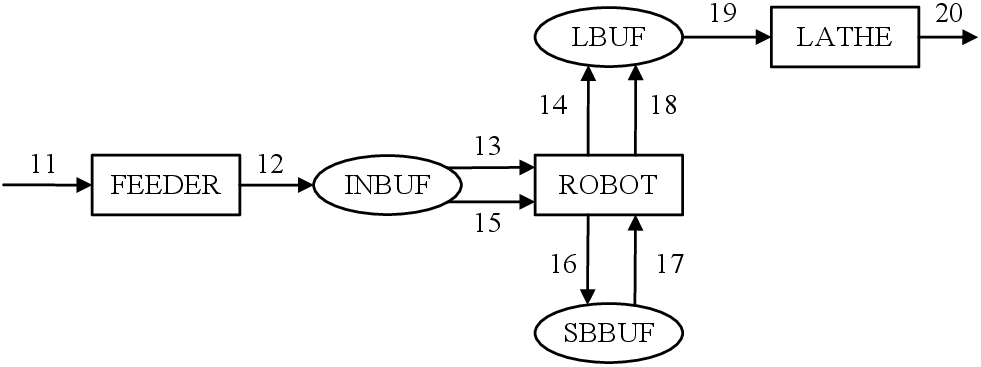}
\caption{{\bf WORKCELL}}
\label{fig4}
\end{figure}

$\bf WORKCELL$ operates as follows: $\bf FEEDER$ acquires a new part
from an infinite source (event 11) then stores it (event 12) in a
2-slot buffer $\bf INBUF$. $\bf ROBOT$ takes a new part from $\bf
INBUF$ (event 13) and stores it (event 14) in a 1-slot buffer $\bf
LBUF$; if $\bf LBUF$ is already full, $\bf ROBOT$ may instead take a
new part from $\bf INBUF$ (event 15) and store it (event 16) in a
1-slot `stand-by' buffer $\bf SBBUF$. If $\bf LBUF$ is empty and
there's already a part in $\bf SBBUF$, $\bf ROBOT$ first unloads the
part in $\bf SBBUF$ (event 17) and loads it in $\bf LBUF$ (event
18). If $\bf LATHE$ is idle and there exists a part in $\bf LBUF$, $\bf
LATHE$ takes that part and starts working on it (event 19), and when
finished exports it and returns to idle (event 20). Event labels
accord with\cite{Wonham:2011b}: odd-(resp. even-) numbered events are
controllable (resp. uncontrollable). The physical interpretations of events are
displayed in Table~\ref{tab1}.
\begin{table}
\caption{Physical interpretation of events}
\label{tab1}
\begin{center}
\scalebox{0.95}{
\begin{tabular}{|c||l|}
\hline
Event label & Physical interpretation \\
\hline
11 & {\bf FEEDER} imports new part from infinite source \\
\hline
12 & {\bf FEEDER} loads new part in {\bf INBUF} \\
\hline
13 & {\bf ROBOT} takes part from {\bf INBUF} for loading into {\bf LBUF} \\
\hline
14 & {\bf ROBOT} loads part from {\bf INBUF} into {\bf LBUF} \\
\hline
15 & {\bf ROBOT} takes part from {\bf INBUF} for loading into {\bf SBBUF} \\
\hline
16 & {\bf ROBOT} loads part from {\bf INBUF} into {\bf SBBUF} \\
\hline
17 & {\bf ROBOT} takes part from {\bf SBBUF} for loading into {\bf LBUF} \\
\hline
18 & {\bf ROBOT} loads part from {\bf SBBUF} into {\bf LBUF} \\
\hline
19 & {\bf LATHE} loads part from {\bf LBUF} and starts working \\
\hline
20 & {\bf LATHE} exports finished part and returns to idle\\
\hline
\end{tabular}
}
\end{center}
\end{table}

The specifications to be enforced are: 1) ${\bf SPEC}_1$ says that a
buffer must not overflow or underflow; 2) ${\bf SPEC}_2$ says that
$\bf ROBOT$ can load $\bf SBBUF$ (event sequence 15.16) only when
$\bf LBUF$ is already full; 3) ${\bf SPEC}_3$ says that $\bf ROBOT$
can load $\bf LBUF$ directly from $\bf INBUF$ (event sequence 13.14)
only when $\bf SBBUF$ is empty; otherwise it must load from $\bf
SBBUF$ (event sequence 17.18). The DES models of plant components
and specifications are shown in Figs.~\ref{fig6} and \ref{fig7}.
\begin{figure}[!t]
\centering
    \includegraphics[scale=0.5]{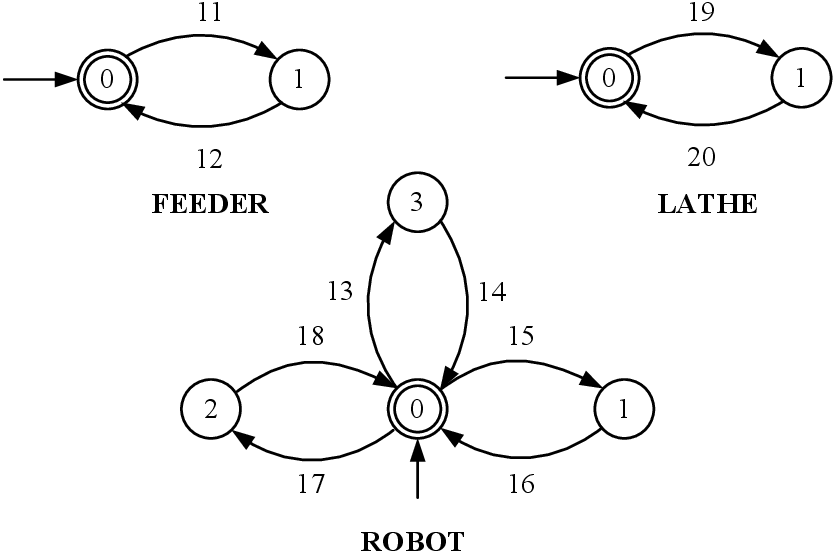}\\
\caption{Plant models to be controlled}
\label{fig6}
\end{figure}
\newline

\begin{figure}[!t]
\centering
    \includegraphics[scale=0.5]{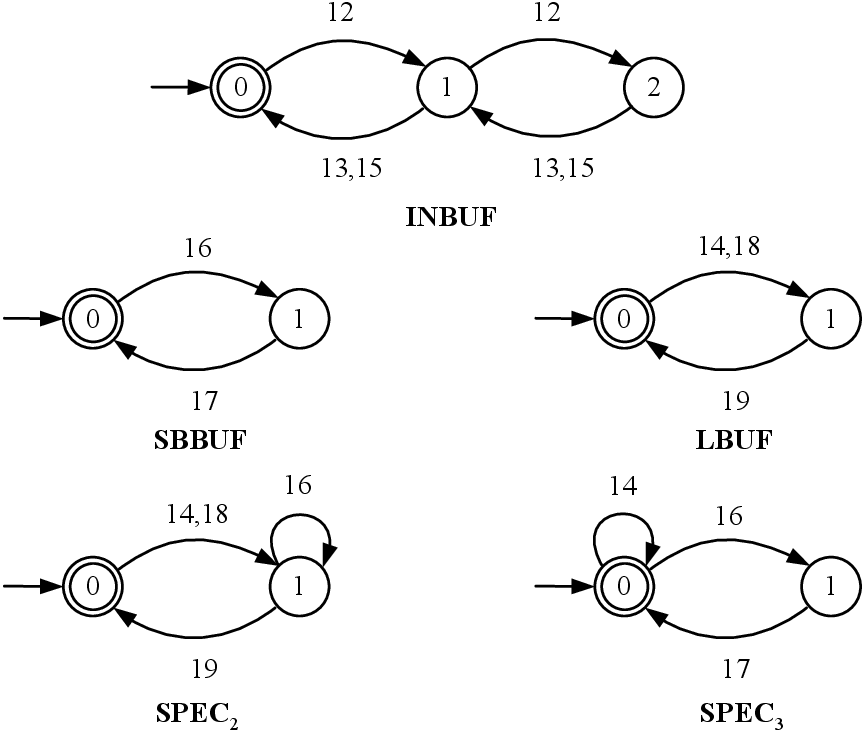}\\
\caption{Model of Specifications}
\label{fig7}
\end{figure}

We first compute the monolithic supervisor by a standard method
(e.g. \cite{Wonham:2011a,Wonham:2011b}). The behavior of $\bf
WORKCELL$ is the synchronous product of $\bf FEEDER$, $\bf ROBOT$,
and $\bf LATHE$. As ${\bf SPEC}_1$ is automatically incorporated in
the buffer models, the total specification $\bf SPEC$ is the
synchronous product of $\bf INBUF$, $\bf LBUF$, $\bf SBBUF$, ${\bf
SPEC}_2$, and ${\bf SPEC}_3$. The monolithic supervisor is ${\bf
SUPER} = Supcon({\bf WORKCELL}, {\bf SPEC})$ with (state,
transition) count (70, 153).

Next by use of procedure {\it Localize}\cite{Wonham:2011a,Wonham:2011b}, we compute the localization of $\bf SUPER$ (in the sense of \cite{CaiWonham:2010a,CaiWonham:2010b}) to each of the three $\bf WORKCELL$ agents, to obtain local controllers $\bf FEEDERLOC$, $\bf ROBOTLOC$ and $\bf LATHELOC$, as shown in Fig.~\ref{fig8}. The local controlled behaviors are
\begin{align*}
{\bf FEEDERSUP} &= Sync({\bf FEEDER}, {\bf FEEDERLOC}),\\
{\bf ROBOTSUP} &= Sync({\bf ROBOT}, {\bf ROBOTLOC}),  \\
{\bf LATHESUP} &= Sync({\bf LATHE}, {\bf LATHELOC}).
\end{align*}
From the transition structures shown in Fig.~\ref{fig8}, we see that $\bf FEEDERLOC$ ($\bf FEEDERSUP$) must import events 13, 14, 15, 16, 17 and 18 from $\bf ROBOT$, and 19 from $\bf LATHE$; $\bf ROBOTLOC$ ($\bf ROBOTSUP$) must import events 12 from $\bf FEEDER$, and 19 from $\bf LATHE$; and $\bf LATHELOC$ ($\bf LATHESUP$) must import events 11 and 12 from $\bf FEEDER$, and 13, 14, 15, 16, 17 and 18 from $\bf ROBOT$.
\begin{figure}[!t]
\centering
    \begin{minipage}{0.48\linewidth}
        \centering
        \begin{overpic}[scale = 0.5]{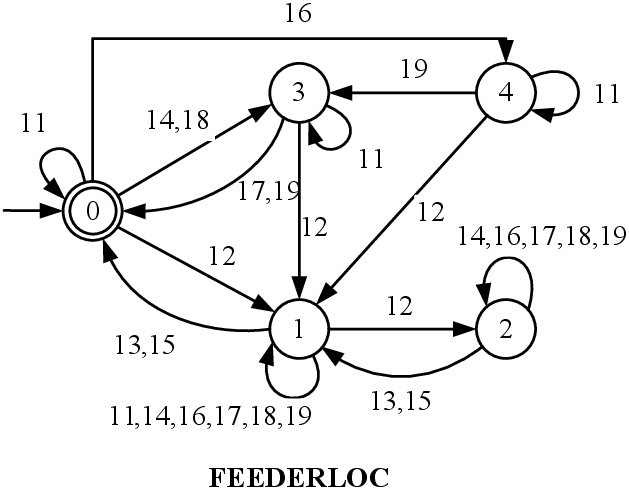}
        \end{overpic}
    \end{minipage}
    \hfill
    \begin{minipage}{0.48\linewidth}
        \centering
        \begin{overpic}[scale = 0.5]{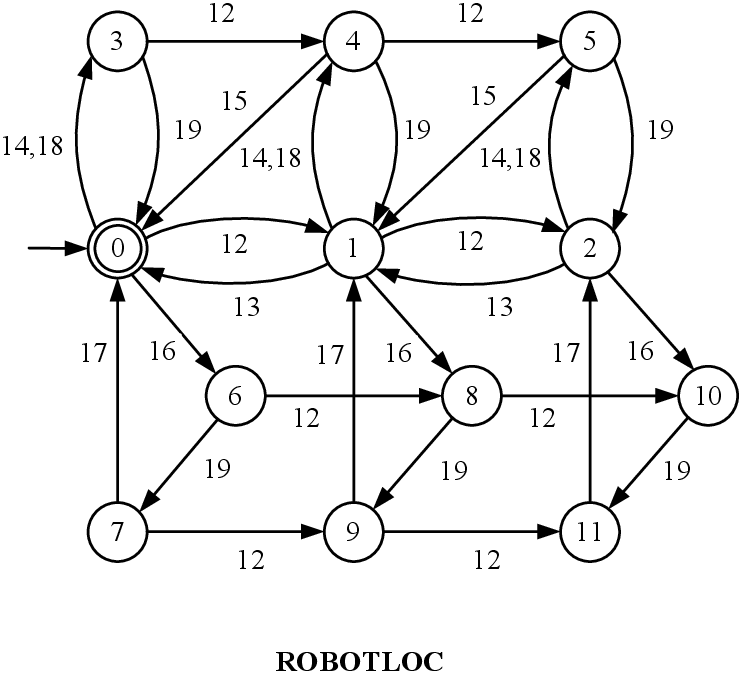}
        \end{overpic}
    \end{minipage}\\
    \vspace{2.5em}
    \begin{minipage}{0.45\linewidth}
        \begin{overpic}[scale = 0.5]{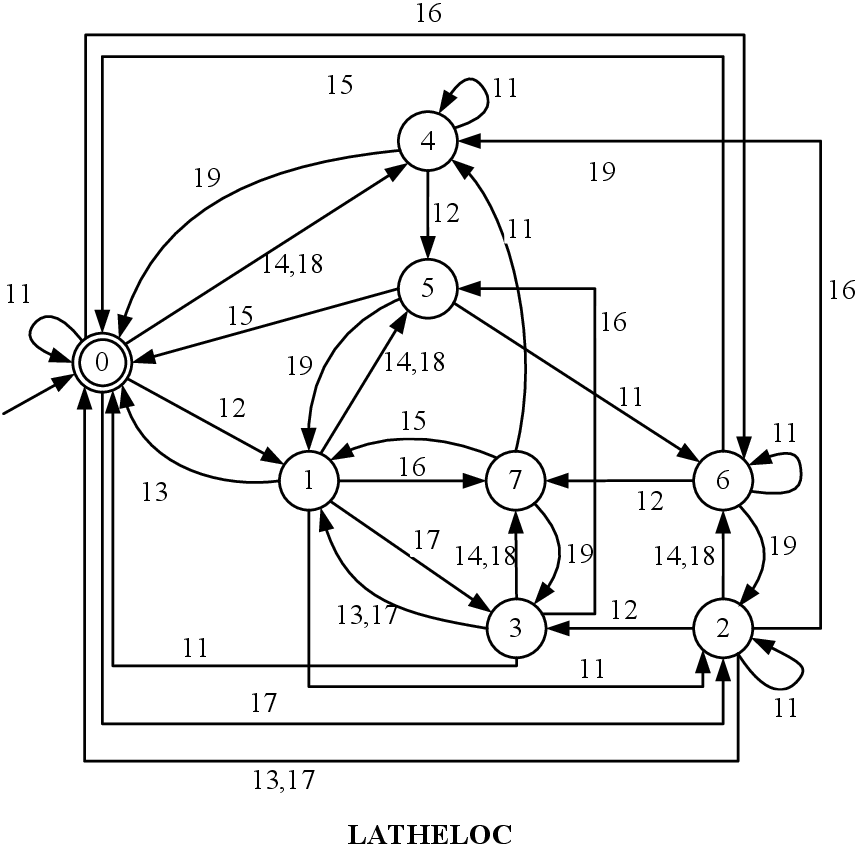}
        \end{overpic}\\
    \end{minipage}\\
\caption{Local Controller for each component. According to Remark 1, for every state $x$ of each controller, and each communication event $\sigma$ imported from some other component, if $\sigma$ is not defined at x, we add a $\sigma$-selfloop. Let $*(x)$ be the set of selfloops to be adjoined at state $x$. In ${\bf FEEDERLOC}$, $*(0) = \{13,15,17,19\}$, $*(3) = \{13, 14,15,16,18\}$, $*(4) = \{13,14,15,16,17,18\}$; in ${\bf ROBOTLOC}$, $*(0) = \{19\}$, $*(1) = \{19\}$, $*(2) = \{12,19\}$, $*(5) = \{12\}$, $*(7) = \{19\}$,$*(9) = \{19\}$,$*(10) = \{12\}$,$*(11) = \{19\}$; in ${\bf LATHELOC}$, $*(0) = \{13,15\}$, $*(1) = \{12,15\}$, $*(2) = \{15\}$, $*(3) = \{12,15\}$, $*(4) = \{13,14,15,16,17,18\}$,$*(5) = \{12,13,14,16,17,18\}$,$*(6) = \{13,14,16,17,18\}$,$*(7) = \{12,13,14,16,17,18\}$.}
\label{fig8}
\end{figure}

\subsection{Illustrative Cases} \label{sec:4.2}


Based on the computed local controllers, we illustrate our new verification tools with the following cases.

\begin{case} \label{case1}
-- Event 13

Taking $\bf FEEDERLOC$ for example, build a  channel ${\bf CH}(R,13,F)$, as
shown in Fig.~\ref{fig9}, using a new event label 113 to represent
the corresponding channel output; use 113 to replace 13 in $\bf
FEEDERSUP$ to obtain ${\bf FEEDERSUP}'$, over the alphabet
\{11,12,113,14,15,16,17,18,19\}.

\begin{figure}[!t]
\centering
    \includegraphics[scale=0.5]{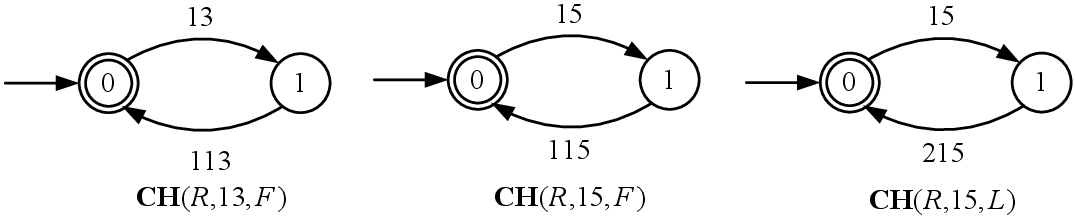}\\
\caption{\small ${\bf CH}(R,13,F)$, ${\bf CH}(R,15,F)$, and ${\bf
CH}(R,15,L)$} \label{fig9}
\end{figure}

Now compute the channeled behavior ${\bf SUPER}'$ according to
\begin{align*}
{\bf SUPER}' = Sync({\bf FEEDERSUP}', {\bf CH}(R,13,F), {\bf ROBOTSUP}, {\bf LATHESUP})
\end{align*}
\noindent over the augmented alphabet \{11, ..., 20, 113\} and with
(state, transition) count (124, 302). Next, to check
delay-robustness we project ${\bf SUPER}'$ modulo supremal
quasi-congruence with nulled event 113, to get, say,
\begin{align*}
{\bf QCSUPER'} ~:=~ Supqc(&{\bf SUPER}', Null[113])\\
                        &(deterministic, with\ size\ (70, 153))
\end{align*}

Finally we verify that ${\bf QCSUPER}'$ is isomorphic to $\bf
SUPER$, and conclude that $\bf SUPER$ is delay-robust with respect
to the channeled communication of event 13 from $\bf ROBOT$ to $\bf
FEEDERLOC$. As a physical interpretation, consider the case where
events 11, 12, 11, 12, 13 have occurred sequentially (i.e. there
exist two parts in $\bf INBUF$ and $\bf ROBOT$ has taken a part from
$\bf INBUF$) and ${\bf FEEDERSUP}'$ has not executed the occurrence
113 of event 13. On the one hand, if ${\bf FEEDERSUP}'$ executes event
113 (i.e. it acknowledges the occurrence of event 13), it will
enable event 11 legally (according to $\bf SUPER$). On the other
hand, if ${\bf FEEDERSUP}'$ does not execute event 113, then $\bf
ROBOT$ will load the part into $\bf LBUF$ and take another part from
$\bf INBUF$ (execute event 15). So ${\bf FEEDERSUP}'$ can enable event
11 again, which is also legal according to $\bf SUPER$. Hence, in
this case, the channeled system ${\bf SUPER}'$ can run `correctly'(no
extra behavior violates the specification) and can `complete' the
given task (with the help of $\bf SBBUF$), i.e. the communication
delay of event 13 is tolerable with respect to $\bf SUPER$.

By the same method, one can verify that $\bf SUPER$ is delay-robust
with respect
event 15 \emph{provided} it is channeled only to ${\bf
FEEDERLOC}$; it must be communicated to ${\bf LATHELOC}$ without
delay. To verify this, we have two separate channels, ${\bf CH}({R,15,F})$ and ${\bf CH}(R,15,L)$, with distinct signal events 115 and 215 (see Fig.~\ref{fig9}). Taking the two channels separately, by Definition~\ref{def1} and the same method as above for event 13, we verify that $\bf SUPER$ is delay-robust when 15 is communicated to $\bf FEEDERLOC$ by ${\bf CH}(R,15,F)$, but delay-critical to $\bf LATHELOC$ by ${\bf CH}(R,15,L)$. Moreover, by Definition~\ref{def_allevents} and the procedure in Sect.~\ref{sec:3.2}, we verify that $\bf SUPER$ is delay-critical when 15 is communicated to both $\bf FEEDERLOC$ and $\bf LATHELOC$.
\end{case}

\begin{case} \label{case2}
-- Events 13 and 15

This case shows that $\bf SUPER$ is delay-robust relative to the
event set \{13, 15\}, with 13 and 15 both channeled to ${\bf FEEDERLOC}$.

Consider the channel ${\bf CH}(R,15,F)$ displayed in Fig.~\ref{fig9},
using the signal event 115 to represent the corresponding channel
output. Use labels 113, 115 to replace 13, 15 in $\bf FEEDERSUP$ to
obtain ${\bf FEEDERSUP}'$, over the alphabet \{11,12,113, 14, 115,16,17,18,19\}.

We  compute the channeled behavior ${\bf SUPER}'$ according to
\begin{align*}
{\bf SUPER}' = Sync({\bf FEEDERSUP}', &{{\bf CH}(R,13,F)}, {{\bf CH}(R,15,F)}, \\
&{\bf ROBOTSUP}, {\bf LATHESUP}),
\end{align*}
\noindent over the augmented alphabet \{11, ..., 20, 113, 115\} and
with (state, transition) count (180, 470). Next, to check
delay-robustness we project ${\bf SUPER}'$ modulo supremal
quasi-congruence with nulled events 113, 115, to get
\begin{align*}
{\bf QCSUPER}' := Supqc(&{\bf SUPER}', Null[113,115])\\
                    &(deterministic, with\ size\ (70, 153))
\end{align*}

Finally ${\bf QCSUPER}'$ turns out to be isomorphic to $\bf SUPER$, and we conclude that $\bf SUPER$ is delay-robust with respect to the channeled communication of events 13, 15 from $\bf ROBOT$ to $\bf FEEDERLOC$. Briefly, the reason is that ${\bf FEEDERSUP}'$ will enable event 11 after it executes event 113 or 115, and $\bf ROBOT$ will remain idle if no more parts are loaded into the system (i.e. event 11 cannot occur again).

\end{case}

\begin{case} \label{case3}
-- Event 19

Event 19 channeled to $\bf ROBOTLOC$ is shown, by computation, or directly by
Definition~\ref{def1}, to be delay-critical with respect to $\bf
SUPER$. By tracking the working process, we show that the indefinite
communication delay of event 19 may result in violation of ${\bf
SPEC}_2$. Consider the following case: events 11,12,11,12,13,14,19
have occurred sequentially, i.e. there exists one part in $\bf
INBUF$, $\bf ROBOT$ has loaded a part in $\bf LBUF$ and $\bf LATHE$
has taken the part from $\bf LBUF$ (i.e. $\bf LBUF$ is now empty).
Since the transmission of event 19 is delayed unboundedly, if $\bf
ROBOT$ doesn't `know' that $\bf LATHE$ has taken the part from $\bf
LBUF$, it may take a new part from $\bf INBUF$ (event 15) and load
it into $\bf SBBUF$ (event 16) according to $\bf ROBOTSUP'$, i.e.
the event sequence 11.12.11.12.13.14.19.15.16 occurs in $\bf
WORKCELL$ with communication delay, violating ${\bf SPEC}_2$. Hence
event 19 is delay-critical.
\end{case}

\begin{case} \label{case4}
-- Event 12

This case shows that although the occurrence of (uncontrollable)
event 12 (channelled to $\bf ROBOTLOC$) may be blocked by its channel ${\bf CH}(F,12,R)$, as shown
in Fig.~\ref{fig14}, this will not violate the specifications.
\begin{figure}[!t]
\centering
    \includegraphics[scale=0.5]{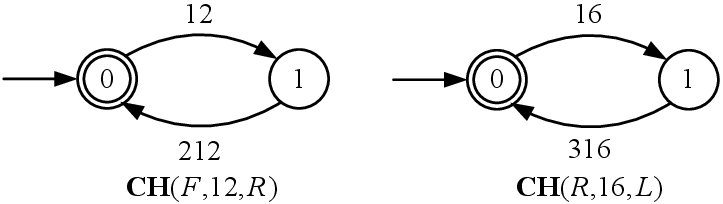}\\
\caption{${\bf CH}(F,12,R)$ and ${\bf CH}(R,16,L)$}
\label{fig14}
\end{figure}
According to Sect~\ref{sec:3.3}, we check whether $L({\bf CH}(F,12,R))$
is controllable with respect to
\begin{eqnarray*}
{\bf NSUPER} = Sync({\bf FEEDERSUP}, {\bf ROBOTSUP}', {\bf LATHESUP}).
\end{eqnarray*}
In \cite{Wonham:2011b}, we use ${Condat}$, which tabulates the set of events
disabled in ${\bf CH}(F,12,R)$ with respect to $\bf NSUPER$, to implement
the verification of the controllability for $L({\bf CH}(F,12,R))$.\footnote{
Here the alphabet of ${\bf CH}(F,12,R)$ is $\{12,212\}$; before calling ${Condat}$,
one should add the selfloop with events in $\bf NSUPER$ but not in $\{12,212\}$
at each state of ${\bf CH}(F,12,R)$.}

By using ${Condat}$, it turns out that event 12 is disabled
at state 1 of \\$L({\bf CH}(F,12,R))$.
Physically, suppose 11, 12 and 11 have occurred
sequentially, i.e., $\bf FEEDER$ has stored a part in $\bf INBUF$
and taken another part (event 11). After that, $\bf FEEDER$ may
store the part in $\bf INBUF$ (event 12, which is uncontrollable).
If $\bf ROBOTSUP$ does not
acknowledge the first occurrence of 12, then ${\bf CH}(F,12,R)$ is at state
1, and thus cannot transmit the next occurrence of 12. So, in the
channeled system ${\bf SUPER}'$, event 12 is blocked by ${\bf CH}(F,12,R)$.
If transmission of the first $12$ is completed (i.e. event 212
occurs) before the second occurrence of event 12, then event 12 will
not be blocked. In $\bf SUPER$, only event 11 occurs between two
occurrences of event 12; thus we say that $\bf SUPER$ is
`1-bound'-delay-robust with respect to event $12$.
\end{case}

\begin{case} \label{case5}
-- Event 16

This case shows that the occurrence of uncontrollable event 16 (channeled to $\bf LATHELOC$) will not be blocked by its channel ${\bf CH}(R,16,L)$, shown in Fig.~\ref{fig14}.

Applying procedure ${Condat}$ in \cite{Wonham:2011a} to
${\bf CH}(R,16,L)$, we see that 16 will not be disabled; we
conclude that event 16 will not be blocked by ${\bf CH}(R,16,L)$, and
$\bf SUPER$ is unbounded-delay-robust with respect to $16$. To
illustrate the conclusion, we consider the following case: there
exist two parts in $\bf INBUF$ and one part in $\bf LBUF$ (event
sequence 11.12.11.12.13.14.11.12); then $\bf ROBOT$ takes a part
from $\bf INBUF$ (event 15) and places it in $\bf SBBUF$ (event 16).
In Fig.~\ref{fig8}, $\bf FEEDERLOC$ is at state 2 and is waiting for
the occurrence of event 13 or 15 ($\bf ROBOT$ takes a part from $\bf
INBUF$), and enables event 11; $\bf ROBOTLOC$ is at state 8 and is
waiting for the occurrence of 19 ($\bf LATHE$ takes a part from $\bf
LBUF$) or the occurrence of event 12; and $\bf LATHELOC$ is at state
1 and is waiting for the occurrence of event 19. Now, the occurrence
of event 19 (which is enabled by $\bf LATHELOC$) will lead the
controlled plant to continue to operate. Even though $\bf LATHELOC$
does not receive the occurrence of 16, the system does not block.
Hence in this case the occurrence of event 16 is not blocked by its
channel ${\bf CH}(R,16,L)$.
\end{case}

\begin{case} \label{case6}
-- All communication events

When all communication events are subject to delay through channels
(i.e. $\Sigma_{ch} = \Sigma_{com}$), it can be verified that
delay-robustness of $\bf SUPER$ in the strong sense of
Definition~\ref{def_allevents} fails, i.e. $\bf SUPER$ fails to be
delay-robust for distributed control by localization. In fact when all the channeled events
except 19 (channeled to $\bf ROBOTLOC$) are received without delay,
Case~\ref{case6} is reduced to Case~\ref{case3}; so $\bf SUPER$
cannot be delay-robust with respect to the set of all communication
events, as asserted by Theorem~2 in Sect.~\ref{sec:3}.
\end{case}

\section{Conclusions and Future Work}

In this paper we have studied distributed control obtained by
supervisor localization on the relaxed assumption (compared to
previous literature\cite{CaiWonham:2010a,CaiWonham:2010b}) that
inter-agent communication of selected `communication events'
(channeled events) may be subject to unknown time delays. For this
distributed architecture we have identified a property of
`delay-robustness' which guarantees that the logical properties of
our delay-free distributed control (i.e. the original DES
specifications) continue to be enforced in the presence of delay,
albeit with possibly degraded temporal behavior. We have shown that
delay-robustness can be effectively tested with polynomial
complexity, and that such tests serve to distinguish between events
that are delay-critical and those that are not. The case that an
uncontrollable channeled event may be blocked by its communication
channel is identified by the algorithm for checking controllability.
A simple workcell exemplifies the approach, showing how
delay-robustness may depend on the subset of events subject to
delay, and that a given event may be delay-critical for some choices
of the delayed event subset but not for others.

With the definitions and tests reported here as basic tools, future
work should include the investigation of alternative channel models
and, of especial interest, global interconnection properties of a
distributed system of DES which render delay-robustness more or less
likely to be achieved. A quantitative approach involving timed
discrete-event systems could also be an attractive extension.

\appendices

\section{Proof of Proposition~\ref{pro1}}
\label{appA}


Recall that ${\bf SUP}' = (Y, \Sigma', \eta, y_0, Y_m)$. According to natural projection $P: \Sigma'^* \rightarrow \Sigma^*$ which maps $(\Sigma' - \Sigma)$ to $\epsilon$, define $\eta':Y \times \Sigma^* \rightarrow Pwr(Y)$ given by
\begin{align}\label{eqa1}
\eta'(y, t) = \{\eta(y,s)|s\in \Sigma'^*, \eta(y,s)! ~\& Ps = t\}.
\end{align}
Let $\rho$ be the supremal quasi-congruence on $Y$ with respect to ${\bf SUP}'$, and define $P_{\rho}: Y \rightarrow Y/\rho = \overline{Y}$. As in (\cite{Wonham:2011a}, Chapt. 6), ${\bf QCSUP}' = (\overline{Y}, \Sigma, \overline{\eta}, \overline{y}_0, \overline{Y}_m)$ is defined with $\overline{\eta}: \overline{Y} \times \Sigma^* \rightarrow Pwr(\overline{Y})$ given by
\begin{align} \label{eqa2}
\overline{\eta}(\overline{y}, t):= \bigcup \{P_\rho(\eta'(y,t))| P_\rho(y) = \overline{y}\},
\end{align}
$\overline{y}_0 = P_\rho(y_0)$ and $\overline{Y}_m = P_\rho(Y_m)$.

\begin{proof}
We must prove that ${\bf QCSUP}'$ represents $PL_m({\bf SUP}')$ and is a canonical generator.

(1) We show that ${\bf QCSUP}'$ represents $PL_m({\bf SUP}')$, i.e, \[L_m({\bf QCSUP}') = PL_m({\bf SUP}')\] and \[L({\bf QCSUP}') = PL({\bf SUP}').\]

(\rmnum{1}) $L({\bf QCSUP}') \subseteq PL({\bf SUP}')$

Let $t\in L({\bf QCSUP}')$. We prove by induction that $t \in PL({\bf SUP}')$.

{\bf Base step}: $t = \epsilon \in PL({\bf SUP}')$ trivially.

{\bf Inductive step}: Suppose $t \in L({\bf QCSUP}')$, $t\in PL({\bf SUP}')$, and $t\alpha \in L({\bf QCSUP}')$; we must prove  $t\alpha \in PL({\bf SUP}')$.

Since $t\alpha \in L({\bf QCSUP}')$, we have $\overline{\eta}(\overline{y_0}, t) !$ and $\overline{\eta}(\overline{y_0}, t\alpha) !$. So, $(\exists \overline{y} \in \overline{Y})\ \overline{y} = \overline{\eta}(\overline{y_0},t) \ \&\ \overline{\eta}(\overline{y},\alpha)!$. We have $\overline{y_0} = P_{\rho}y_0$. Since $t\in PL({\bf SUP}')$, $(\exists s\in L({\bf SUP'}))\ Ps = t$, i.e. $\eta(y_0, s)!$. So, $\eta(y_0, s) \in \eta'(y_0, t)$, i.e., $\eta'(y_0, t) \neq \emptyset$. Thus, $\overline{y} = P_{\rho}\eta'(y_0, t)$ because ${\bf QCSUP}'$ is deterministic. Since $\overline{\eta}(\overline{y},\alpha)!$ and $\eta'(y_0, t) \neq \emptyset$, there exists $y \in \eta'(y_0, t)$ such that $\overline{\eta}(\overline{y},\alpha) = P_\rho\eta'(y, \alpha)$. Hence, $\eta'(y_0, t\alpha) !$. However, according to (\ref{eqa1})
\[\eta'(y_0, t\alpha) = \{\eta(y_0, s)| s \in \Sigma^*, \eta(y_0, s)!, Ps = t\alpha\}.\]
Thus, $(\exists s \in L({\bf SUP}'))\ Ps = t\alpha$, so $t\alpha \in PL({\bf SUP}')$.

(\rmnum{2}) $PL({\bf SUP}') \subseteq L({\bf QCSUP}')$

Let $t\in PL({\bf SUP}')$; we show that $t \in L({\bf QCSUP}')$.

{\bf Base step}: $t = \epsilon \in L({\bf QCSUP}')$ trivially.

{\bf Inductive step}: Supposing $t \in PL({\bf SUP}')$, $t\in L({\bf QCSUP}')$, and $t\alpha \in PL({\bf SUP}')$, we show $t\alpha \in L({\bf QCSUP}'))$.

Since $t \in PL({\bf SUP}')$ and $t \in L({\bf QCSUP}')$, $\eta'(y_0, t) \neq \emptyset$, $\overline{\eta}(\overline{y_0},t) !$; letting $\overline{y} = \overline{\eta}(\overline{y_0},t)$, then $\overline{y} = P_{\rho}\eta'(y_0, t)$ because ${\bf QCSUP}'$ is deterministic. Since $t\alpha \in PL({\bf SUP}')$, there exists $s'\in L({\bf SUP}')$, i.e. $\eta(y_0, s')!$ such that $Ps' = t\alpha$; thus
\begin{align*}
&\bigcup \{\eta'(y',\alpha)|y' \in \eta'(y_0, t)\}\\
=~&\bigcup \{\eta'(y',\alpha)| s\in \Sigma'^*, y' = \eta (y_0, s), Ps = t\}~~ (\text{according to (\ref{eqa1})})\\
=~& \{\eta((\eta(y_0, s), v))| v\in \Sigma'^*, \eta(\eta(y_0, s), v)!, Ps = t, Pv = \alpha\} \\
=~& \{\eta(y_0, sv)| sv \in \Sigma'^*, \eta(y_0, sv)!, P(sv) = t\alpha\}\\
\neq~& \emptyset ~~(\text{since $\eta(y_0, s')!$ and $Ps' = t\alpha$}),
\end{align*}
i.e. there exists $y \in \eta'(y_0, t)$ such that $\eta'(y, \alpha)!$. Then, $P_{\rho}y = \overline{y}$ due to $\overline{y} = P_{\rho}\eta'(y_0, t)$. Hence, $\overline{\eta}(\overline{y}, \alpha) = P_{\rho}\eta'(y, \alpha) \neq \emptyset$, i.e., $\overline{\eta}(\overline{y}, \alpha)!$. So, $t\alpha \in L({\bf QCSUP}')$.

(\rmnum{3}) $L_m({\bf QCSUP}') \subseteq PL_m({\bf SUP}')$

For any $t\in \Sigma^*$, if $t\in L_m({\bf QCSUP}')$, then $(\exists \overline{y} \in \overline{Y})\ \overline{y} = \overline{\eta}(\overline{y}_0, t) \ \&\ \overline{y} \in \overline{Y}_m$. By (i), we conclude that $t \in PL({\bf SUP}')$. Thus, $\eta'(y_0, t) \neq \emptyset$. Because ${\bf QCSUP}'$ is deterministic, we know that $\overline{y} = P_{\rho}\eta'(y_0, t)$. So, $P_{\rho}\eta'(y_0, t) \in \overline{Y}_m$. Further, $\eta'(y_0, t) \cap Y_m \neq \emptyset$, i.e., there exists $s \in \Sigma'^*$ such that $\eta(y_0, s)!\ \&\ \eta(y_0, s) \in Y_m \ \&\ Ps = t$. Hence, $s \in L_m({\bf SUP}')$, thus $t = Ps \in PL_m({\bf SUP}')$.

(\rmnum{4}) $PL_m({\bf SUP}') \subseteq L_m({\bf QCSUP}')$

For any $t\in \Sigma^*$, if $t\in PL_m({\bf SUP}')$, then $\eta'(y_0, t)! \ \&\ \eta'(y_0, t) \cap Y_m \neq \emptyset$. By (ii), $t \in L({\bf QCSUP}')$, i.e., $(\exists \overline{y} \in \overline{Y})\ \overline{\eta}(\overline{y}_0, t)! \ \&\ \overline{y} = \overline{\eta}(\overline{y}_0, t)$. Since ${\bf QCSUP}'$ is deterministic, $\overline{y} = P_{\rho}\eta'(y_0,t)$. We conclude that $P_{\rho}\eta'(y_0,t) \in \overline{Y}_m$ from $\eta'(y_0, t) \cap Y_m \neq \emptyset$. Hence, $\overline{y} \in \overline{Y}_m$, i.e., $t\in L_m({\bf QCSUP}')$.

2. We prove that ${\bf QCSUP}'$ is a canonical(minimal-state) generator.

Let $\nu$ be a congruence on $\overline{Y}$ defined according to: $\overline{y} \equiv \overline{y'}$ (mod $\nu$) provided

(\rmnum{1}) ($\forall t \in \Sigma^*$) $\overline{\eta}(\overline{y}, t)! \Leftrightarrow \overline{\eta}(\overline{y'},t)!$

(\rmnum{2})($\forall t \in \Sigma^*$) $\overline{\eta}(\overline{y}, t)\in \overline{Y}_m \Leftrightarrow \overline{\eta}(\overline{y'},t) \in \overline{Y}_m$.

With reference to (\cite{Wonham:2011a}, Proposition 2.5.1), projection (mod $\nu$) reduces ${\bf QCSUP}'$ to a state-minimal generator.

Define $P_{\nu}:\overline{Y} \rightarrow \overline{Y}/\nu$ and write $\nu\circ\rho = ker(P_{\nu}\circ P_{\rho})$. Next we will prove that $\nu\circ\rho$ is a quasi-congruence on $Y$,i.e., for all $y,y' \in Y$,
\begin{align*}
P_{\nu}\circ P_{\rho}(y) &= P_{\nu}\circ P_{\rho}(y') \Rightarrow (\forall \alpha \in \Sigma)P_{\nu}\circ P_{\rho}\eta(y,\alpha) = P_{\nu}\circ P_{\rho}\eta(y',\alpha).
\end{align*}
Now
\begin{align*}
&P_{\nu}\circ P_{\rho}(y) = P_{\nu}\circ P_{\rho}(y') \\
\Rightarrow  ~&P_{\nu}(P_{\rho}(y)) = P_\nu(P_{\rho}(y')) \\
\Rightarrow ~&P_{\nu}(\overline{\eta}(P_{\rho}(y)), \alpha) = P_{\nu}(\overline{\eta}(P_{\rho}(y')), \alpha) \\
~&\text{  (cf. (ii) of Proposition 2.5.1 in \cite{Wonham:2011a})} \\
\Rightarrow ~&P_{\nu}(\overline{\eta}(\overline{y}, \alpha)) = P_\nu(\overline{\eta}(\overline{y'}, \alpha)) \\
\Rightarrow ~&P_\nu(P_{\rho}(\eta'(y,\alpha))) = P_\nu(P_{\rho}(\eta'(y',\alpha))) \\
\Rightarrow ~&P_\nu\circ P_{\rho}\eta'(y, \alpha) = P_\nu\circ P_{\rho}\eta'(y', \alpha)
\end{align*}

Hence, $\nu\circ\rho$ is a quasi-congruence on $Y$. Obviously, $\nu\circ\rho$ is coarser than $\rho$. However, $\rho$ is the supremal quasi-congruence on $Y$, so for any $y,y' \in Y$, if $P_\nu(P_{\rho}(y)) = P_\nu(P_{\rho}(y'))$, i.e., $(y,y')\in \nu\circ\rho$, then $(y, y') \in \rho$, which means that $P_{\rho}(y) = P_{\rho}(y')$. Hence, $\nu = \bot$ (namely all its cells are singletons).

We have shown that ${\bf QCSUP}'$ is a canonical generator.
\end{proof}

\section{Proof of Proposition~\ref{pro:relchn}}\label{app0}


For the proof, we need the natural projections:
\begin{align*}
Q':&\Sigma'^* \rightarrow \Sigma^* \\
Q_{T}':&\Sigma_T'^* \rightarrow \Sigma^*\\
Q_{r_{12}'}:&\Sigma_T'^* \rightarrow (\Sigma\cup\{r_{21}'\})^*\\
Q_{r_{21}'}:&(\Sigma\cup\{r_{21}'\})^* \rightarrow \Sigma^*\\
Q_{ch}:&\Sigma'^*\rightarrow \{r,r'\}^*\\
Q_{Tch}:&\Sigma_T'^*\rightarrow \{r,r_{21}',r_{12}'\}^*.
\end{align*}
Thus $Q_{T}' = Q_{r_{21}'}Q_{r_{12}'}$. According to the definition
of ${\bf CH}(2,r,1)$ and ${\bf TCH}(2,r,1)$, $L({\bf CH}(2,r,1)) = \overline{(r.r')^*}$
and  $L({\bf TCH}(2,r,1)) = \overline{(r.r_{21}'.r_{12}')^*}$.

Let ${\bf NSUP} = Sync({\bf SUP}_1', {\bf SUP}_2)$; then
\begin{subequations}
\begin{align}
L({\bf SUP}') &= L({\bf NSUP})\cap Q_{ch}^{-1}L({\bf CH}(2,r,1)), \label{eq:NSUP1}\\
L_m({\bf SUP}') &= L_m({\bf NSUP})\cap Q_{ch}^{-1}L_m({\bf CH}(2,r,1)).\label{eq:NSUP2}
\end{align}
\end{subequations}
Let ${\bf TNSUP} = Sync({\bf TSUP}_1', {\bf SUP}_2)$; then
\begin{subequations}
\begin{align}
L({\bf TSUP}') &= Q_{r_{12}'}^{-1}L({\bf TNSUP})\cap Q_{Tch}^{-1}L({\bf TCH}(2,r,1)),\label{eq:TNSUP1}\\
L_m({\bf TSUP}') &= Q_{r_{12}'}^{-1}L_m({\bf TNSUP})\cap Q_{Tch}^{-1}L_m({\bf TCH}(2,r,1)). \label{eq:TNSUP2}
\end{align}
\end{subequations}

Since from $\bf NSUP$ (resp. $\bf TNSUP$) to $\bf TNSUP$ (resp. $\bf NSUP$), only $r'$
(resp. $r_{21}'$) is replaced by $r_{21}'$ (resp. $r'$), we still have the following results:
\begin{subequations}
\begin{align}
&s = x_1.r.x_2 \in L({\bf NSUP}) \Leftrightarrow t = x_1.r.x_2 \in L({\bf TNSUP}) \label{eq:relr1}\\
&s = x_1.r.x_2.r'.x_3 \in L({\bf NSUP}) \Leftrightarrow t = x_1.r.x_2.r_{21}'.x_3 \in L({\bf TNSUP}) \label{eq:relr2}
\end{align}
\end{subequations}
where the strings $x_1, x_2$, and $x_3$ are free of $r,r'$ and $r_{21}'$.
Furthermore,
\begin{align}\label{eq:ch}
Q'L({\bf SUP}') &= Q'\big(L({\bf NSUP})\cap Q_{ch}^{-1}L({\bf CH}(2,r,1))\big) \notag \\
                    &= Q'\big(L({\bf NSUP})\cap \overline{((\Sigma-\{r\})^*.r(\Sigma-\{r\})^*.r')^*}\notag\\
                    &= Q_{r_{21}'}\big(L({\bf TNSUP})\cap \overline{((\Sigma-\{r\})^*.r(\Sigma-\{r\})^*.r_{21}')^*}\big)\\
                    &~~~~~~~~~~~\mbox{~~(From $\bf NSUP$ to $\bf TNSUP$, $r'$ is replaced by $r_{21}'$)}\notag\\
                    &= Q_{r_{21}'}\big(L({\bf TNSUP}) \cap Q_{r_{12}'}(Q_{Tch}^{-1}L({\bf TCH}(2,r,1)))\big) \notag
\end{align}

Also, we need the following lemmas.
\begin{lemma} ($r'$, $r_{21}'$ and $r_{12}'$ \emph{insertion}) \label{lem:rinsert}
Let $s=x_1.r.x_2 \in L({\bf SUP})$ where the strings $x_1, x_2$ are free of $r$;
then $s' = x_1.r.r'.x_2 \in L({\bf SUP}')$, and $t' = x_1.r.r_{21}'.r_{12}'.x_2 \in L({\bf TSUP}')$.
\end{lemma}

\emph{Proof.} Immediate from the definition of relevant synchronous product.

\begin{lemma} \label{lem:ack}
Let $s' = x_1.r.x_2.r'.x_3 \in L_m({\bf SUP}')$, where the strings $x_i(i = 1,2,3)$ are
free of $r,r'$. For any $x_{31}, x_{32} \in (\Sigma-\{r\})^*$ that satisfy $x_3 = x_{31}.x_{32}$,
$t' := x_1.r.x_2.r_{21}'.x_{31}.r_{12}'.x_{32} \in L_m({\bf SUP}'')$. On the other side,
if $t' = x_1.r.x_2.r_{21}'.x_{31}.r_{12}'.x_{32} \in L_m({\bf SUP}'')$, then $s' = x_1.r.x_2.r'.x_{31}.x_{32}
\in L_m({\bf SUP}')$.
\end{lemma}

\emph{Proof.} For the first part, it follows from $s' \in L_m({\bf SUP}') = L_m({\bf NSUP})\cap Q_{ch}^{-1}L_m({\bf CH}(2,r,1))$
that $x_1.r.x_2.r'.x_3 \in L_m({\bf NSUP})$. By (\ref{eq:relr2}), $x_1.r.x_2.r_{21}'.x_3 \in L_m({\bf TNSUP})$.
So $Q_{r_{12}'}t' = x_1.r.x_2.r_{21}'.x_{31}.\\x_{32} \in L_m({\bf TNSUP})$, and thus $t' \in Q_{r_{12}'}^{-1}L_m({\bf TNSUP})$.
Furthermore, $Q_{Tch}t' = r.r_{21}'.r_{12}' \in L_m({\bf TCH}(2,\\r,1))$. Hence,
$t' \in Q_{r_{12}'}^{-1}L_m({\bf TNSUP}) \cap Q_{Tch}^{-1}L_m({\bf TCH}(2,r,1) = L_m({\bf TSUP}')$.
The argument for the second part is similar.

\emph{Proof of Proposition~\ref{pro:relchn}.} (If) We assume that
\begin{subequations} \label{TSUP'}
\begin{align}
& Q'L({\bf SUP}')=L({\bf SUP}) \label{SUPr'a} \\
& Q'L_m({\bf SUP}')=L_m({\bf SUP}) \label{SUPr'b}\\
& Q' \mbox{ has the observer property with respect to ${\bf SUP}'$
and ${\bf SUP}$}  \label{SUPr'c}.
\end{align}
\end{subequations}
It must be shown that the counterpart properties hold for $Q_T'$ and
${\bf TSUP}'$, namely
\begin{subequations} \label{TSUP'}
\begin{align}
& Q_{T}'L({\bf TSUP}')=L({\bf SUP}) \label{TSUP'a} \\
& Q_{T}'L_m({\bf TSUP}')=L_m({\bf SUP}) \label{TSUP'b} \\
& Q_{T}' \mbox{ has the observer property with respect to ${\bf TSUP}'$
and ${\bf SUP}$}  \label{TSUP'c} .
\end{align}
\end{subequations}

For $(\subseteq)$ of (\ref{TSUP'a}),
\begin{align*}
Q_{T}'L({\bf TSUP}') &= Q_{T}' \big(Q_{r_{12}'}^{-1}L({\bf TNSUP})\cap Q_{Tch}^{-1}L({\bf TCH}(2,r,1))\big)\\
                     &= (Q_{r_{21}'} Q_{r_{12}'}) \big(Q_{r_{12}'}^{-1}L({\bf TNSUP})\cap Q_{Tch}^{-1}L({\bf TCH}(2,r,1))\big)\\
                  &\subseteq Q_{r_{21}'}\big(L({\bf TNSUP}) \cap Q_{r_{12}'}(Q_{Tch}''^{-1}L({\bf TCH}(2,r,1)))\big)\\
                  & = Q'L({\bf SUP}')~~~\mbox{(By (\ref{eq:ch}))}\\
                  & \subseteq L({\bf SUP}). ~~~\mbox{(By (\ref{SUPr'a}))}
\end{align*}
For $(\supseteq)$ of (\ref{TSUP'a}), if $s=x_1.r.x_2 \in L({\bf SUP})$, then
applying Lemma~\ref{lem:rinsert} to $s$ with $r_{21}'$ and $r_{12}'$ we get
that $t'=x_1.r.r_{21}'.r_{12}'.x_2 \in L({\bf TSUP}')$ and then
$s=Q_T'(t')$ , as claimed. The argument for (\ref{TSUP'b}) is similar.

For the observer property we have by (\ref{SUPr'c}) that
\begin{align*}
(\forall s' \in L({\bf SUP}'))(\forall v \in \Sigma^*)& Q'(s').v \in
L_m({\bf SUP})  \Rightarrow\\
&(\exists v' \in (\Sigma')^*) s'.v' \in L_m({\bf SUP}') \ \&\
Q'(v')=v
\end{align*}
and must verify the counterpart (\ref{TSUP'c}), namely
\begin{align*}
(\forall t' \in L({\bf TSUP}'))(\forall u \in \Sigma^*)& Q_T'(t').u
\in L_m({\bf SUP})  \Rightarrow\\
&(\exists u' \in (\Sigma_T')^*) t'.u' \in L_m({\bf TSUP}') \ \&\
Q_T'(u')=u.
\end{align*}
For the proof let $t' \in L({\bf TSUP}')$, $u \in \Sigma^*$, $Q_T'(t').u
\in L_m({\bf SUP})$. Next we prove (\ref{TSUP'c}) from the following three
cases: (1) $t' = x_1.r.x_2$, (2)$t' = x_1.r.x_2.r_{21}'.x_3$ and (3)$t' =
x_1.r.x_2.r_{21}'.x_3.r_{12}'.x_4$, where $x_i(i = 1,2,3,4)$ are free of
$r$, $r_{21}'$, and $r_{21}'$. Note that since the re-transmission of $r$
will not start until the last transmission is completed, in this proof we only
consider the transmission of one instance of $r$.

(1) By $t' \in L({\bf TSUP}')$, we have $t' \in Q_{r_{12}'}^{-1}L({\bf TNSUP})$.
Since $t'$ is free of $r_{12}'$, $x_1.r.x_2 = Q_{r_{12}}t' \in L({\bf TNSUP})$.
By (\ref{eq:relr1}), $x_1.r.x_2 \in L({\bf NSUP})$. Also, $Q_{ch}(x_1.r.x_2) = r
\in L({\bf CH}(2,r,1))$. So, $s':= x_1.r.x_2 \in L({\bf NSUP})\cap Q_{ch}^{-1}
L({\bf CH}(2,r,1)) = L({\bf SUP}')$. Define $v = u$; then $Q'(s').v = Q_T'(t').u \in
L_m({\bf SUP})$. By (\ref{SUPr'c}), there exists $v' \in \Sigma'^*$ such that
$Q'v' = v$ and $s'.v' \in L_m({\bf SUP}')$, i.e. $x_1.r.x_2.v \in L_m({\bf SUP}')$.
By (\ref{eq:NSUP2}), $s'.v' \in Q_{ch}^{-1}L_m({\bf CH}(2,r,1))$; thus $v'$ can
be written as $v_1'.r'.v_2'$ where $v_1'$ and $v_2'$ are free of $r'$.
Namely, $x_1.r.x_2.v_1'.r'v_2' \in L_m({\bf SUP}')$. By Lemma~\ref{lem:ack},
$x_1.r.x_2.v_1'.r_{21}'.r_{12}'.v_2'\in L_m({\bf TSUP}')$.
Define $u' = v_1'.r_{21}'.r_{12}'.v_2'$; then $Q_T'u' = v_1'v_2'
= Q'v' = v = u$, and $t'.u' \in L_m({\bf TSUP}')$, as required by (\ref{TSUP'c}).

(2) Similar to case (1), we have $t' \in L({\bf TNSUP})$. By (\ref{eq:relr2}),
$s' := x_1.r.x_2.r'.x_3 \in L({\bf NSUP})$. Furthermore, since $Q's' = r.r' \in
L({\bf CH}(2,r,1))$, $s' \in L({\bf SUP}')$. Define $v = u$; then $Q'(s').v = Q_T(t').u
\in L_m({\bf SUP})$. By (\ref{SUPr'c}), there exists $v' \in \Sigma'^*$ such that
$Q'v' = v$ and $s'.v' \in L_m({\bf SUP}')$.
By (\ref{eq:NSUP2}), $s'.v' \in Q_{ch}^{-1}L_m({\bf CH}(2,r,1))$; thus $v'$ is free of
$r'$, i.e. $v' = v$ (In this proof only one instance of $r$ is taken into consideration).
So, $x_1.r.x_2.r'.x_3.v' \in L_m({\bf SUP}')$. By Lemma~\ref{lem:ack},
$x_1.r.x_2.r_{21}'.x_3.r_{12}'.v' \in L_m({\bf TSUP}')$. Define $u' = r_{12}'.v'$; then
$Q_{T}'u' = v' = v = u$ and $t'.u' \in L_m({\bf TSUP}')$, as required by (\ref{TSUP'c}).

(3) Let $s' := x_1.r.x_2.r'.x_3.x_4$. By (\ref{eq:TNSUP1}), we have $s' = Q_{r_{12}'}t'
\in L({\bf TNSUP})$. Similar to case (2), if defining $v' = u$, then we can verify
that $x_1.r.x_2.r_{21}'.x_3.r_{12}'x_4.v' \in L_m({\bf TSUP}')$. Define $u' = v'$; then
$Q_{T}'u' = v' = u$ and $t'u' \in L_m({\bf TSUP}')$, as required by (\ref{TSUP'c}).

\vspace{1em}

(Only if) We assume that conditions (\ref{TSUP'a})-(\ref{TSUP'c}) hold; it must be
shown that conditions (\ref{SUPr'a})-(\ref{SUPr'c}) hold.

For $(\subseteq)$ of (\ref{SUPr'a}), let $s' \in L({\bf SUP}')$; we prove that
$Q's \in L({\bf SUP})$ from the following two cases: (1) $s' = x_1.r.x_2$, and
(2) $s' = x_1.r.x_2.r'.x_3$, where $x, x_1, x_2, x_3$ are free of $r$ and $r'$.


(1) It follows from $s' \in L({\bf SUP}')$ that $x_1.r.x_2 \in L({\bf NSUP})$.
By (\ref{eq:relr1}), we have $t:= x_1.r.x_2 \in L({\bf TNSUP})$, and thus
$t \in Q_{r_{12}'}^{-1}L({\bf TNSUP})$. Also, $Q_{Tch}t = r \in L({\bf TCH}(2,r,1))$.
So, $t \in L({\bf TSUP}')$, and thus $Q_T't \in Q_T'L({\bf SUP}') \subseteq L({\bf SUP})$.
Hence, we also have $Q's' = t = Q_T't \in L({\bf SUP})$.

(2) Similar to case (1), we have $x_1.r.x_2.r'.x_3 \in L({\bf NSUP})$. By (\ref{eq:relr2}),
$t:= x_1.r.x_2.r_{21}'.x_3 \in L({\bf TNSUP})$. Let $t':= x_1.r.x_2.r_{21'}.x_3.r_{12}'$;
then $t' \in Q_{r_{12}'}^{-1}L({\bf TNSUP})$. Also, $Q_{Tch}t' = r.r._{21}'.r_{12}' \in
L({\bf TCH}(2,r,1))$. So, $t' \in L({\bf TSUP}')$, and thus $Q_T't' \in Q_T'L({\bf TSUP}')
\subseteq L({\bf SUP})$. Hence, $Q's' = x_1.r.x_2.x_3 = Q_T't' \in L({\bf SUP})$.

$(\supseteq)$ of (\ref{SUPr'a}) can be verified similar to the proof of ($\supseteq$)
of (\ref{TSUP'a}). The argument for (\ref{SUPr'b}) is similar.

For the observer property we have by (\ref{TSUP'c}) that
\begin{align*}
(\forall t' \in L({\bf TSUP}'))(\forall u \in \Sigma^*)& Q_T'(t').u
\in L_m({\bf SUP})  \Rightarrow\\
&(\exists u' \in (\Sigma_T')^*) t'.u' \in L_m({\bf TSUP}') \ \&\
Q_T'(u')=u
\end{align*}
and must verify the counterpart (\ref{SUPr'c}), namely
\begin{align*}
(\forall s' \in L({\bf SUP}'))(\forall v \in \Sigma^*)& Q'(s').v \in
L_m({\bf SUP})  \Rightarrow\\
&(\exists v' \in (\Sigma')^*) s'.v' \in L_m({\bf SUP}') \ \&\
Q'(v')=v.
\end{align*}
For the proof let $s' \in L({\bf SUP}')$, $v \in \Sigma^*$, $Q'(s').v
\in L_m({\bf SUP})$. Next we prove (\ref{SUPr'c}) from the following two
cases: (1) $s' = x_1.r.x_2$, (2)$s' = x_1.r.x_2.r'.x_3$, where $x_i(i = 1,2,3)$
are free of $r$ and $r'$.

(1) Similar to case (1) in proving ($\subseteq$) of (\ref{SUPr'a}), by $s' \in L({\bf SUP}')$,
we have $t':= x_1.r.x_2 \in L({\bf TSUP}')$. Define $u = v$; then $Q_T'(t').u = Q'(s').v
\in L_m({\bf SUP})$. By ({\ref{TSUP'c}}), there exists $u' \in \Sigma_T'$ such that
$Q_T'u' = u$ and $t'.u' \in L_m({\bf TSUP}')$. Namely, $x_1.r.x_2.u' \in L_m({\bf TSUP}')$.
So by $Q_{Tch}(x_1.r.x_2.u') = r.Q_{Tch}(u')$ there must exist $u_1',u_2',u_3' \in
\Sigma^*$ such that $u' = u_1'.r_{21}'.u_2.r_{12}'.u_3$. Applying Lemma~\ref{lem:ack},
$x_1.r.x_2.u_1'.r'.u_2'.u_3' \in L_m({\bf SUP}')$. Define $v' = u_1'.r'.u_2'.u_3'$;
then $Q'v' = u_1'.u_2'.u_3' = Q_T'u' = u = v$, and $s'.v' \in L_m({\bf SUP}')$, as required
by (\ref{SUPr'c}).

(2) Similar to case (2) in proving ($\subseteq$) of (\ref{SUPr'a}), by
$s' \in L({\bf SUP}')$, we have $t': = x_1.r.x_2.r_{21}'.x_3.r_{12}' \in L({\bf TSUP}')$.
Define $u = v$; then $Q_T'(t').u = x_1.r.x_2.x_3.v = Q'(s')v \in L_m({\bf SUP})$.
By ({\ref{TSUP'c}}), there exists $u' \in \Sigma_T'$ such that $Q_T'u' = u$ and $t'u'
\in L_m({\bf TSUP}')$. Namely, $x_1.r.x_2.r_{21'}.x_3.r_{12}'.u' \in L_m({\bf TSUP}')$.
Since $Q_{Tch}(x_1.r.x_2.r_{21'}.x_3.r_{12}'.u') = (r.r_{21}'.r_{12}').Q_{Tch}(u')$,
and only one instance of $r$ is taken into consideration, $u'$ is free of $r_{21}'$,
and $r_{12}'$ (also $u'$ is free of $r'$); thus $Q_T'u' = u' = Q'u'$. Applying Lemma~\ref{lem:ack},
we obtain that $x_1.r.x_2.r'.x_3.u' \in L_m({\bf SUP}')$. Define $v' = u'$;
then $Q'v' = Q'u' = u' = Q_T'u' = u = v$, and $s'.v' \in L_m({\bf SUP}')$, as required
by (\ref{SUPr'c}).


\section{Delay-Robustness of Decentralized Controllers}
\label{appB}


Here we show that the verification tool for delay-robustness of distributed controllers can be used without change to verify the delay-robustness of decentralized supervisors.

Let $\bf G$ be the DES to be controlled, and ${\bf LOC}_1$ and ${\bf LOC}_2$ be two decentralized controllers, which achieve global supervision with zero-delay communication. Let $\Sigma_i$, $\Sigma_{io}$ be the event set and observable event set of ${\bf LOC}_i$, respectively $(i = 1, 2)$. Assume event $r \in \Sigma_1 \cap (\Sigma_{2o} - \Sigma_{1o})$, which is not observed by ${\bf LOC}_1$, but is observed by ${\bf LOC}_2$. Hence, $r$ should be transmitted to ${\bf LOC}_1$. We use the channel ${\bf CH}(2,r,1)$, as shown in Fig.~\ref{fig1}, to transmit $r$ and use $r'$ to represent that ${\bf LOC}_1$ receives the occurrence of $r$. Then, replacing $r$ by $r'$, we obtain ${\bf LOC}_1'$. Let ${\bf SUP} = Sync({\bf G}, {\bf LOC}_1, {\bf LOC}_2)$, ${\bf SUP}' = Sync({\bf G}, {\bf LOC}_1', {\bf CH}(2,r,1), {\bf LOC}_2)$, and ${\bf QCSUP}' = Supqc({\bf SUP}', Null[r'])$. Finally, by  Theorem~\ref{thm1}, if ${\bf SUP} \approx {\bf QCSUP}'$, $\bf SUP$ is delay-robust with respect to $r$, or ${\bf LOC}_1$ and ${\bf LOC}_2$ achieve global supervision with unbounded delay communication.

\section{Proof of Theorem~\ref{thm2}}
\label{appC}


The relevant natural projections are
\begin{align*}
P':&(\Sigma_1 \cup \{\alpha',\beta'\} \cup \Sigma_{E'})^* \rightarrow \Sigma^*\\
P'':&(\Sigma_1 \cup \{\beta'\} \cup \Sigma_{E'})^* \rightarrow
\Sigma^*.
\end{align*}
Thus $P'$ (resp. $P''$) nulls $\{\alpha',\beta'\}$ (resp.
$\{\beta'\}$) $\cup \{r' | r' \in \Sigma_{E'}\}$.

For the proof we assume that
\begin{subequations} \label{SUP'}
\begin{align}
& P'L({\bf SUP}')=L({\bf SUP}) \label{SUP'a} \tag{\theequation a}\\
& P'L_m({\bf SUP}')=L_m({\bf SUP}) \label{SUP'b} \tag{\theequation b}\\
& P' \mbox{ has the observer property with respect to ${\bf SUP}'$
and ${\bf SUP}$}  \label{SUP'c} \tag{\theequation c}.
\end{align}
\end{subequations}
It must be shown that the counterpart properties hold for $P''$ and
${\bf SUP}''$, namely
\begin{subequations} \label{SUP''}
\begin{align}
& P''L({\bf SUP}'')=L({\bf SUP}) \label{SUP''a} \tag{\theequation a}\\
& P''L_m({\bf SUP}'')=L_m({\bf SUP}) \label{SUP''b} \tag{\theequation b}\\
& P'' \mbox{ has the observer property with respect to ${\bf SUP}''$
and ${\bf SUP}$}  \label{SUP''c} \tag{\theequation c}.
\end{align}
\end{subequations}

We need the following lemmas.
\begin{lemma} ($\alpha'$ \emph{insertion}) \label{lem:1}
Let $s''=x.\alpha.x.\beta.x.\beta'.x \in L({\bf SUP}'')$ where
the (generally distinct) strings written $x$ are free
of $\alpha$, $\beta$, $\beta'$. Then
$s' := x.\alpha.\alpha'.x.\beta.x.\beta'.x \in L({\bf SUP}')$.
\end{lemma}

\emph{Proof.} Immediate from the definition of the relevant
synchronous products. \hfill $\square$

Evidently Lemma~\ref{lem:1} extends to multiple appearances of
$\alpha$, $\beta$, $\beta'$ and arbitrary possible orderings of the
$\alpha$ with respect to the $\beta$, $\beta'$; and holds with $L$
replaced by $L_m$ throughout.

\begin{lemma} ($\alpha'$ \emph{deletion}) \label{lem:2}
Let $t' = x.\alpha.y.\alpha'.z.\beta.z.\beta'.z \in L_m({\bf
SUP}')$, where the strings $x,y,z$ are free of $\alpha$, $\alpha'$,
$\beta$, $\beta'$. Then $t'' := x.\alpha.y.z.\beta.z.\beta'.z \in
L_m({\bf SUP}'')$.
\end{lemma}

\emph{Proof.} Recall that the synchronous products defining
$L_m({\bf SUP}')$ and $L_m({\bf SUP}'')$ differ only in that the
latter omits the factor ${\bf CH}(E,\alpha,1)$, and in ${\bf
SUP}''_1$ $\alpha$ appears as in ${\bf SUP}_1$ (and not as
$\alpha'$).  The string $y$ is of form, say $a_1.b_1.a_2.b_2$, where
$a_1,a_2 \in (\Sigma'_1)^*$ and $b_1,b_2 \in \Sigma_{E'}^*$, hence
by definition of synchronous product can be re-ordered as
$a_1.a_2.b_1.b_2$ without affecting membership of $t'$ in $L_m({\bf
SUP}')$; next $\alpha.y$ can be re-ordered in $t'$ as
$a_1.a_2.\alpha.b_1.b_2$, and then $\alpha.y.\alpha'$ can be
re-ordered as $a_1.a_2.\alpha.\alpha'.b_1.b_2$, again preserving
membership of $t'$ in $L_m({\bf SUP}')$. In this new ordering it is
clear that deletion of $\alpha'$ converts $t'$ to a string $t''$ in
$L_m({\bf SUP}'')$. Reversing the ordering restores our original
$t''$, proving the claim. \hfill $\square$

\emph{Proof of Theorem~\ref{thm2}.} For (\ref{SUP''a}) suppose
$s''=x.\alpha.x.\beta.x.\beta'.x \in L({\bf SUP}'')$. By
Lemma~\ref{lem:1}, $s':=x.\alpha.\alpha'.x.\beta.x.\beta'.x \in
L({\bf SUP}')$, so by (\ref{SUP'a}) $P'(s') \in L({\bf SUP})$.
Evidently $P''(s'')=P'(s')$ as required. For the reverse inclusion,
if $s=x.\alpha.x.\beta.x \in L({\bf SUP})$ then applying
Lemma~\ref{lem:1} to $s$ with $\beta$ we get that
$s''=x.\alpha.x.\beta.\beta'.x \in L({\bf SUP}'')$ and then
$s=P''(s'')$ , as claimed. The argument for (\ref{SUP''b}) is
similar. For the observer property we have by (\ref{SUP'c}) that
\begin{align*}
(\forall s' \in L({\bf SUP}'))(\forall v \in \Sigma^*)& P'(s').v \in
L_m({\bf SUP})  \Rightarrow\\
&(\exists v' \in (\Sigma')^*) s'.v' \in L_m({\bf SUP}') \ \&\
P'(v')=v
\end{align*}
and must verify the counterpart (\ref{SUP''c}), namely
\begin{align*}
(\forall s'' \in L({\bf SUP}''))(\forall v \in \Sigma^*)& P''(s'').v
\in L_m({\bf SUP})  \Rightarrow\\
&(\exists v'' \in (\Sigma'')^*) s''.v'' \in L_m({\bf SUP}'') \ \&\
P''(v'')=v.
\end{align*}
For the proof let $s'' \in L({\bf SUP}'')$, $v \in \Sigma^*$,
$P''(s'').v \in L_m({\bf SUP})$. By Lemma~\ref{lem:1} with
$\alpha'$-insertion we obtain $s' \in L({\bf SUP}')$ such that
$P'(s')=P''(s'')$, so $P'(s').v \in L_m({\bf SUP})$, and by
(\ref{SUP'c}) there is $v' \in (\Sigma')^*$ with $s'.v' \in L_m({\bf
SUP}')$ and $P'(v')=v$.  Thus $v'$ is of the form
$v'=y.\alpha.y.\alpha'.y.\beta.y.\beta'.y$ (possibly with multiple
$\alpha$'s and $\beta$'s in various interleavings). Define
$v''=Q(v')$ where $Q$ projects $\alpha'$ to the empty string
$\epsilon$. Then $P''(v'')=P''Q(v')=P'(v')=v$. Also, by
Lemma~\ref{lem:2}, $s''.v''=Q(s'.v') \in QL_m({\bf SUP}') \subseteq
L_m({\bf SUP}'')$. Thus $v''$ has the properties required in
(\ref{SUP''c}), which completes the proof. \hfill $\square$




\bibliographystyle{unsrt}        
\bibliography{zhang_references}

\end{document}